\title{Stochastic Games with Disjunctions of Multiple~Objectives\thanks{This work was supported by the DFG RTG 2236 ``UnRAVeL'' (Winkler) and DFG projects 383882557 ``SUV'' and 427755713 ``GOPro'' (Weininger).}}
\author{Tobias Winkler
	\institute{RWTH Aachen University, Germany}
	\email{tobias.winkler@cs.rwth-aachen.de}
	\and
	Maximilian Weininger
	\institute{Technical University of Munich, Germany}
	\email{maxi.weininger@tum.de}
}
\tikzstyle{state}+=[minimum size = 6mm, inner sep=0,outer sep=1]
\tikzset{->,>=stealth'}
\tikzset{
    diagonal fill/.style 2 args={fill=#2, path picture={
            \fill[#1, sharp corners] (path picture bounding box.south west) -|
            (path picture bounding box.north east) -- cycle;}},
    reversed diagonal fill/.style 2 args={fill=#2, path picture={
            \fill[#1, sharp corners] (path picture bounding box.north west) |- 
            (path picture bounding box.south east) -- cycle;}}
}
\tikzstyle{max}=[rectangle,draw=black,thick,inner sep=.8mm,minimum size=6mm]
\tikzstyle{target}=[accepting] 
\tikzstyle{min}=[diamond,draw=black,thick,inner sep=0.5mm,minimum size=8mm]
\tikzstyle{prob}=[circle,draw=black,thick,inner sep=.8mm,minimum size=6mm] 
\tikzstyle{mcstate}=[circle,draw=black,thick,inner sep=1mm]
\tikzstyle{state}=[circle,draw=black,thick,inner sep=1mm,minimum size=8mm] 
\tikzstyle{trans}=[]
\tikzstyle{col1}=[fill=yellow!35]
\tikzstyle{col2}=[fill=green!20]
\tikzstyle{col3}=[fill=blue!15]
\tikzstyle{both cols}=[diagonal fill={green!20}{yellow!35}]
\tikzstyle{both cols2}=[diagonal fill={green!20}{blue!15}]
\tikzstyle{smaller}=[scale=0.75]
\tikzset{every loop/.style={min distance=5mm,looseness=2.5}}
\tikzstyle{pcurve}=[-, thick,line cap= round]
\newcommand{\drawaxes}{
    \draw[black!55,->] (0,0) -- (0,1.4);
    \draw[black!55,->] (0,0) -- (1.4,0);
}
\newcommand{\Naturals}{\mathbb{N}}
\newcommand{\Reals}{\mathbb{R}}
\DeclareDocumentCommand{\post}{D<>{} O{} D(){}}{\mathsf{Post}_{#1}^{#2}\ifthenelse{\isempty{#3}}{}{(#3)}}
\newcommand{\eqdef}{\vcentcolon=}
\newcommand{\defeq}{=\vcentcolon}
\newcommand{\NP}{\mathsf{NP}}
\renewcommand{\P}{\mathsf{P}}
\newcommand{\coNP}{\mathsf{coNP}}
\newcommand{\EXPTIME}{\mathsf{EXPTIME}}
\newcommand{\NEXP}{\mathsf{NEXPTIME}}
\newcommand{\PSPACE}{\mathsf{PSPACE}}
\tikzstyle{myArrowStyle} = [shorten >=1pt,->,>=stealth']
\newcommand{\conv}{conv}
\newcommand{\dwc}{\mathsf{dwc}}
\newcommand{\genobj}{\star}
\newcommand{\safe}{\square}
\newcommand{\reach}{\lozenge}
\newcommand{\straa}{\sigma}
\newcommand{\strab}{\tau}
\newcommand{\probability}{\mathbb{P}}
\newcommand{\val}{\mathcal{A}}
\newcommand{\Achiever}{Eve}
\newcommand{\Spoiler}{Adam}
\newcommand{\query}{\varphi}
\newcommand{\targets}{\mathcal{T}}
\newcommand{\game}{\mathcal{G}}
\newcommand{\maxstrat}{\straa}
\newcommand{\minstrat}{\strab}
\newcommand{\sinit}{{s_0}}
\newcommand{\choices}{\mathsf{Act}}
\newcommand{\Dist}{\mathsf{Dist}}
\newcommand{\supp}{\mathsf{supp}}
\renewcommand{\path}{\pi}
\newcommand{\achsymb}{\mathsf{E}}
\newcommand{\spoisymb}{\mathsf{A}}
\newcommand{\uviop}{\Phi}
\newcommand{\pcurves}{\mathfrak{C}}
\newcommand{\pset}[1]{\mathcal{P}(#1)}
\newcommand{\uvidomelem}{\mathcal{X}}
\renewcommand{\conv}{\mathsf{conv}}
\newcommand{\Smax}{S_{\achsymb}}
\newcommand{\Smin}{S_{\spoisymb}}
\newcommand{\Sprob}{S_{\mathsf{P}}}
\newcommand{\inpw}{\ \dot{\in}\ }
\newcommand{\upval}{\val^{\forall \exists}}
\newcommand{\unf}{\mathsf{Unf}}
\newtheorem{definition}{Definition}
\newtheorem{example}{Example}
\newtheorem*{claim*}{Claim}
\newtheorem{lemma}{Lemma}
\newtheorem{observation}{Observation}
\newtheorem{theorem}{Theorem}
\newtheorem{corollary}{Corollary}
\begin{document}

\maketitle

\begin{abstract}
Stochastic games combine controllable and adversarial non-determinism with stochastic behavior and are a common tool in control, verification and synthesis of reactive systems facing uncertainty.
Multi-objective stochastic games are natural in situations where several---possibly conflicting---performance criteria like time and energy consumption are relevant.
Such \emph{conjunctive} combinations are the most studied multi-objective setting in the literature.
In this paper, we consider the dual \emph{disjunctive} problem.
More concretely, we study turn-based stochastic two-player games on graphs where the winning condition is to guarantee at least one reachability or safety objective from a given set of alternatives.
We present a fine-grained overview of strategy and computational complexity of such \emph{disjunctive queries} (DQs) and provide new lower and upper bounds for several variants of the problem, significantly extending previous works.
We also propose a novel value iteration-style algorithm for approximating the set of Pareto optimal thresholds for a given DQ.
\end{abstract}

\section{Introduction}
\label{sec:intro}

Stochastic games (SG), e.g.\ \cite{Condon90,shapley1953}, combine controllable and adversarial non-determinism with stochastic behavior.
In their turn-based two-player version, SGs are played on graphs where the vertices are called states, and every state either belongs to one of the two players \Achiever\ and \Spoiler, or is controlled by a probabilistic environment.
In each round, the player in control of the state chooses an action---an edge of the graph---and the game transitions to the successor state.
In probabilistic states, the successor is sampled according to a fixed observable distribution over the outgoing edges.
\emph{Simple} SGs~\cite{Condon90} have just a single \emph{reachability} objective.
A key question is whether \Achiever\ can control her states such that a target is reached with at least a certain probability, no matter the behavior of the opponent \Spoiler. 
Dually, \Spoiler\ has a \emph{safety objective} in this setting:
He should maximize the chance of staying in the---from his point of view---safe region, i.e., avoiding Eve's target states.

\emph{Multi-objective stochastic games}~\cite{cfk13} extend this by allowing Boolean combinations of several different probability or expectation thresholds on various objectives.
Such games have been used to synthesize optimal controllers in application scenarios where the system at hand is exposed to an environment with both stochastic and non-deterministic aspects~\cite{concur14basset,DBLP:conf/qest/ChenKSW13,FengWHT15}.
A natural subclass of multi-objectives are \emph{disjunctive queries} (DQ)~\cite{cfk13,FH10} where the player has to satisfy at least one alternative from a given set of options.

In this paper, we study DQs with both reachability and safety options.
More specifically, given a game and a finite set of reachability and safety objectives, each equipped with a desired threshold probability, we ask whether Eve can satisfy at least one option with probability at least the respective threshold.
Our motivation for studying DQs is twofold.
(1) DQs are interesting in their own right as they allow for a fine-grained specification of alternatives over desirable outcomes of a controlled stochastic system.
(2) DQs are equivalent to the more widely used \emph{conjunctive} queries (CQs) under an alternative semantics, namely if the opponent \Spoiler\ has to \emph{reveal} his strategy to \Achiever\ before the game starts.
This amounts to changing the quantification order over strategies from $\exists\forall$---which is the standard order---to $\forall\exists$.
In general, this makes a difference since multi-objective SGs are not \emph{determined}~\cite{cfk13}.
This optimistic ``\emph{asserted-exposure}'' ($\forall\exists$) semantics is interesting in situations where the non-deterministic system state can actually be observed at any given point in time.
For instance, in the smart heating case study from~\cite{larsen-smartheating}, the position of the doors in a house is non-deterministically controlled by \Spoiler\ and not directly observable.
However, if door sensors were to be installed, the $\forall\exists$-semantics would be more adequate since the door positions are now observable (but still uncontrollable).

\begin{figure}[t]
    \centering
    \begin{tikzpicture}[node distance=3mm and 8mm, every node/.style={scale=1}, myArrowStyle]
    \node[prob] (prob) {$s_0$};
    \node[max,above right=of prob,xshift=5mm] (s1) {$s_1$};
    \node[min,below right=of prob,xshift=5mm] (s2) {$s_2$};
    \node[prob,target,col1,right=12mm of s1] (T1) {\small $T_1$};
    \node[prob,target,col2,right=12mm of s2] (T2){\small $T_2$};
    
    \draw[trans] (prob) -- node[above left] {$\nicefrac 1 2$} (s1);
    \draw[trans] (prob) -- node[below left] {$\nicefrac 1 2$}(s2);
    \draw[trans] (s1) -- (T1);
    \draw[trans] (s1) -- (T2);
    \draw[trans] (s2) -- (T1);
    \draw[trans] (s2) -- (T2);
    
    \node[below = -1mm of s2] {\tikz[scale=.5]{
            \filldraw[black!15] (0,1) -- (1,0) -- (0,0) -- cycle;
            \draw[pcurve] (0,1) -- (1,0);
            \drawaxes
    }};
    
    \node[ above= 1mm of s1] {\tikz[scale=.5]{
            \filldraw[black!15] (0,1) -- (1,1) -- (1,0) -- (0,0) -- cycle;
            \draw[pcurve] (0,1) -- (1,1) -- (1,0);
            \drawaxes
    }};
    
    \node[left= 0mm of prob] {\tikz[scale=.5]{
            \filldraw[black!15] (0,1) -- (.5,1) -- (.5,.5) -- (1,.5) -- (1,0) -- (0,0) --cycle;
            \draw[pcurve] (0,1) -- (.5,1) -- (.5,.5) -- (1,.5) -- (1,0);
            \drawaxes
    }};

    \node[ right= 2mm of T1, yshift=1mm] {\tikz[scale=.5]{
            \filldraw[black!15] (0,1) -- (1,1) -- (1,0) -- (0,0) -- cycle;
            \draw[pcurve] (0,1) -- (1,1) -- (1,0);
            \drawaxes
    }};

    \node[ right= 2mm of T2, yshift=-1mm] {\tikz[scale=.5]{
            \filldraw[black!15] (0,1) -- (1,1) -- (1,0) -- (0,0) -- cycle;
            \draw[pcurve] (0,1) -- (1,1) -- (1,0);
            \drawaxes
    }};
    
    \end{tikzpicture}
    \caption{
        Example SG with a disjunctive query $\probability(\reach T_1) \geq x \vee \probability(\reach T_2) \geq y$.
        The Pareto sets, i.e., the feasible threshold vectors $(x,y)$ are depicted next to the respective states.    
    }
    \label{fig:introExample}
\end{figure}

The technical intricacies of DQs are best illustrated by means of an example:
Consider the SG in Figure~\ref{fig:introExample}.
It comprises 5 states: the probabilistic state $s_0$, states $s_1$ and $s_2$ controlled by Eve and Adam, respectively, and the two targets labeled $T_1$ and $T_2$.
Suppose that Eve's objective is the DQ ``reach $T_1$ with probability at least $x$ \emph{or} $T_2$ with probability at least $y$'', in symbols $\probability(\reach T_1) \geq x \vee \probability(\reach T_2) \geq y$.
The coordinate systems next to the states show their \emph{Pareto sets}, i.e., the set of threshold vectors $(x,y)$ for which Eve can win the DQ, assuming the game starts in that state.
For $s_1$, the Pareto set is the whole box $[0,1]^2$ since Eve can reach either of the targets surely from $s_1$ by picking the respective action.
For $s_2$, the Pareto set contains all convex combinations of $(0,1)$ and $(1,0)$ and all point-wise smaller vectors, forming a triangle.
This is because Adam has to distribute the whole probability mass \emph{somewhere}; for any threshold vector $(x, y)$, $x + y \leq 1$, he cannot avoid satisfying it.
However, if $x + y > 1$, Adam can prevent Eve from winning in state $s_2$:
For example, if $(x, y) = (0.6, 0.6)$, Adam can randomize equally between both actions, and no target is reached with at least $0.6$.
This also shows that solving a DQ is \emph{not} equivalent to solving each objective separately:
From $s_2$, Eve can neither guarantee that $T_1$ nor $T_2$ is reached with positive probability; however, she can guarantee the thresholds $(0.5, 0.5)$ in the DQ.
Finally, Eve can achieve each threshold vector $(x,y)$ with $x \leq 0.5$ or $y \leq 0.5$ from $s_0$ because $s_1$ is reached with probability $0.5$ where she can put all probability mass on one of the targets.
On the other hand, Eve cannot ensure any vector $(x,y)$ with $x > 0.5 \wedge y > 0.5$ because once she has fixed a strategy, Adam can mirror it so that both targets are reached with \emph{exactly} $0.5$.

The example in Figure~\ref{fig:introExample} demonstrates two important properties of disjunctive queries in SGs:
Firstly, Pareto frontiers are not necessarily convex, as in state $s_0$.
This is in contrast to \emph{conjunctive queries} where the set of achievable probability thresholds is always convex~\cite{cfk13,EKVY08}.
Secondly, as mentioned above, SGs with multiple objectives are in general not \emph{determined}~\cite{cfk13}, i.e., it is relevant which player fixes their strategy first, thereby \emph{revealing} it to the other player before the game starts.
In fact, in the example above, switching the quantification order allows Eve to take advantage by reacting to the strategy of Adam.
For instance, she could then ensure that at least one of the targets is reached with probability at least $0.75$.

\paragraph*{Contributions and overview}
In summary, this paper makes the following contributions:
\begin{itemize}
    \item A comprehensive overview of strategy (Section~\ref{sec:strat_comp}, Table~\ref{tab:scomp}) and computational complexity (Section~\ref{sec:ccomp}, Table~\ref{tab:ccomp}) of disjunctive reachability-safety queries in stochastic games, significantly extending previous results from the literature~\cite{cfk13,FH10,RRS17}.
    In particular, motivated by the observation that randomized strategies are undesirable or meaningless for certain applications (e.g., medical or product design~\cite{quatmann}), we study the setting of DQs under \emph{deterministic} strategies for both players.
    Notably, this lead to rather high complexities:
    Qualitative queries are $\PSPACE$-hard and quantitative reachability is even undecidable.
    
   	\item A value iteration-style algorithm in the vein of~\cite{cfk13} for approximating the Pareto sets of DQs or CQs under the alternative asserted-exposure semantics (Section~\ref{sec:alg_vi}).
\end{itemize}

\paragraph*{Related work}
SGs were introduced by Shapley~\cite{shapley1953} in 1953.
Simple SGs---the turn-based variant with a reachability objective---are one of the intriguing problems in $\NP \cap \coNP$ but not known to be in $\P$~\cite{CondonCompl}.
See \cite{Condon90,gandalf} for an overview of solution algorithms.
Turn-based SGs were studied with a variety of other objectives, e.g. \cite{krishSurveyLimsupLiminf,krishVI,krishSurveyOmega}.
Other types of SGs include concurrent games~\cite{krishConcurrent,prismConcurrent}, limited information games~\cite{cav-AKW19,krishSurveyPartial}, and bidding games~\cite{surveyBidding}.

Stochastic systems with multiple objectives have been extensively studied for more than a decade.
Markov decision processes, SGs with a single player, were investigated with multiple reachability or LTL objectives~\cite{EKVY08} as well as multiple discounted sum~\cite{DBLP:conf/lpar/ChatterjeeFW13}, total reward~\cite{DBLP:conf/tacas/ForejtKNPQ11} or mean payoff objectives~\cite{DBLP:journals/lmcs/ChatterjeeKK17}.
Further, the question of percentile queries was addressed in~\cite{filar1995percentile,RRS17}, and combinations of probabilistic and non-probabilistic objectives in~\cite{bertonRaskin}.
Non-standard multi-objective queries were developed together with domain experts in~\cite{DBLP:conf/fase/BaierDKDKMW14}.

For SGs with multiple objectives, many decidability questions are still open.
For conjunctive reachability, it is only known that the Pareto set can be approximated~\cite{lics-mosg} with guaranteed precision, even in non-stopping games.
For total reward, the problem is proven decidable only for stopping games with two-dimensional queries~\cite{BF16} but it can be approximated in higher dimensional stopping SGs~\cite{cfk13}. If only deterministic strategies are allowed, the exact problem is undecidable~\cite{cfk13},
and so are generalized mean-payoff objectives in SGs~\cite{DBLP:conf/fossacs/Velner15}.
However, keeping mean-payoff above a certain threshold with some probability is $\coNP$-complete \cite{DBLP:conf/lics/Chatterjee016}. 
Further, lexicographic preferences over multiple reachability or safety objectives can be reduced to single objectives~\cite{cav-lex}.
The tool PRISM-games~\cite{prism3} implements compositional approaches to verification and strategy synthesis of several multi-objective problems~\cite{concur14basset,prism-mosg}.

To the best of our knowledge, disjunctive queries were so far only considered as a special case of more general Boolean combinations for expected rewards~\cite{cfk13}, mean-payoff SGs~\cite{DBLP:conf/fossacs/Velner15}, and in deterministic generalized reachability games on graphs~\cite{FH10}.

\iftoggle{arxiv}{}{
\paragraph*{Full version}
A full version of this paper including detailed proofs is available~\cite{arxiv}.
}

\section{Preliminaries}
\label{sec:prelims}

\subparagraph{General definitions.}
For sets $A$ and $B$, the set of functions $A \to B$ is written $B^A$.
The set of finite words over a non-empty set $A$ is written $A^*$.
For countable sets $A$ we let $\Dist(A) \eqdef \{P \in [0,1]^A \mid \sum_{a \in A} P(a) = 1\}$ be the set of all \emph{probability distributions} on $A$.
The \emph{support} of $d \in \Dist(A)$ is defined as $\supp(d) = \{a \in A \mid d(a) >0\}$.
The $i$-th component of a vector $\vec{x} \in [0,1]^n$ is denoted $x_i$.
We compare vectors $\vec{x},\vec{y} \in [0,1]^n$ component-wise, i.e., $\vec{x} \leq \vec{y}$ iff $x_i \leq y_i$ for all $i = 1,\ldots,n$.
A set $X \subseteq \Reals^n$ is \emph{convex} if for all $\vec{x}, \vec{y} \in X$ and $p \in [0,1]$ it holds that $p\vec{x} + (1{-}p)\vec{y} \in X$.
The \emph{convex hull} $\conv(X)$ of $X$ is the smallest convex superset of $X$.
Given sets $X,Y \subseteq \Reals^n$ and a real number $p \in [0,1]$, we define the $p$\emph{-convex combination} $pX + (1{-}p)Y \eqdef \{p\vec{x} + (1{-}p)\vec{y} \mid \vec{x} \in X, \vec{y} \in Y\}$.
The \emph{downward-closure} of $X \subseteq [0,1]^n$ is defined as $\dwc(X) \eqdef \{\vec{y}\in [0,1]^n \mid \exists \vec{x} \in X \colon \vec{y} \leq \vec{x}\}$.
$X$ is called \emph{downward-closed} if $X = \dwc(X)$.
A \emph{closed half-space} is a set $\{\vec{x} \in \Reals^n \mid  \vec{n}\cdot \vec{x} \geq d\}$ where $\vec{n} \in \Reals^n\setminus\{\vec{0}\}$ and $d\in \Reals$.
A \emph{polyhedron} is the intersection of finitely many closed half-spaces.
Polyhedra are convex.

\subparagraph{Stochastic games and strategies.}
Intuitively, the games considered in this paper are played by moving a pebble along the edges (called \emph{transitions} from now on) of a finite directed graph.
The vertices (subsequently called \emph{states}) of this graph are partitioned into three classes which determine the states controlled by \Achiever, \Spoiler, and the probabilistic environment, respectively:

\begin{definition}[SG]
	A \emph{stochastic game (SG)} is a tuple $\game = (\Smax,\Smin,\Sprob,\sinit,P,\choices)$, where $\Smax \uplus \Smin \uplus \Sprob \defeq S$ are finite disjoint sets of \emph{states} controlled either by \Achiever\ ($\Smax$), \Spoiler\ ($\Smin$), or the probabilistic environment ($\Sprob$). 
    The game starts in the \emph{initial state} $\sinit \in S$.
    Each $s \in \Smax \cup \Smin$ has a non-empty set $\choices(s) \subseteq S$ of \emph{actions} available to \Achiever\ (\Spoiler, resp.). For all $s \in \Sprob$, $P \colon \Sprob \to \Dist(S)$ is a probability distribution over the successors of $s$.
\end{definition}
For $s \in \Sprob$ and $t \in S$ we write $P(s,t)$ rather than $P(s)(t)$.
A state $s \in S$ is called \emph{sink} if either $s \in \Sprob$ and $P(s,s) = 1$ or $s \in S \setminus \Sprob$ and $\choices(s) = \{s\}$.
$P$ and $\choices$ together induce a directed graph on $S$.
We often sketch this \emph{game graph} in our figures (e.g.\ Figure~\ref{fig:introExample}), drawing \Achiever's, \Spoiler's and the probabilistic states with boxes, diamonds and circles, respectively; and we omit the self-loops on sinks to ease the presentation.
%
For $k \geq 0$, we let $\game^{\leq k}$ be the restriction of $\game$ to $k$-steps, that is, $\game^{\leq k}$ is obtained from $\game$ by counting the transitions from $\sinit$ taken so far and entering an error sink once their number exceeds $k$.
A \emph{Markov decision process} (MDP) is the 1-player version of an SG, i.e., either $\Smax = \emptyset$ or $\Smin = \emptyset$. A \emph{Markov chain} (MC) is a 0-player SG, i.e., $S = \Sprob$. For technical reasons, we allow MDPs and MCs with countably infinite state spaces.
SGs, on the other hand, are always \emph{finite} in this paper.

\emph{Strategies} define the semantics of SGs.
A (general) strategy for \Achiever\ is a function $\maxstrat \colon S^*\Smax \to \Dist(S)$ such that $\supp(\maxstrat(\path s)) \subseteq \choices(s)$ for all $\path s \in S^*\Smax$.
A strategy $\sigma$ is \emph{deterministic} if $\maxstrat(\path s)$ is a point-distribution for all $\path s \in S^* \Smax$, i.e., if it is not randomized.

%
To describe strategies by finite means (if possible), we use \emph{strategy automata}.
Formally, a strategy automaton for \Achiever\ is a structure $\mathcal{M} = (M, \mu, \nu,m_0)$
with $M$ a countable set of memory elements,
$\mu \colon S \times M \to M$ a \emph{memory update} function,
$\nu \colon \Smax \times M \to \Dist(S)$ a \emph{next move} function,
and $m_0 \in M$ an initial memory state.
Given a strategy automaton $\mathcal{M}$, the \emph{induced MDP} is the game
\[
    \game^\mathcal{M} ~\eqdef~ (\, \emptyset,\, \Smin \times M,\, (\Sprob \cup \Smax) \times M,\, (s_0, m_0),\, P^\mathcal{M},\, \choices^\mathcal{M} \,) ~,
\]
where the transition probability function $P^\mathcal{M}$ is defined as follows:
Let $m,m' \in M$ be arbitrary and let $s \in S$.
Then, if $s \in \Sprob$, we let $P^\mathcal{M}((s,m),(s',m')) \eqdef P(s,s')$ if $\mu(s',m) = m'$;
if $s \in \Smax$, then $P^\mathcal{M}((s,m),(s',m')) \eqdef \nu((s,m))(s')$ if $\mu(s',m) = m'$;
and $P^\mathcal{M}((s,m),(s',m')) = 0$ in all other cases.
Moreover, $\choices^\mathcal{M}((s,m)) = \{(s',m') \mid s' \in \choices(s) \wedge \mu(s',m) = m'\}$ for all $(s,m) \in \Smin \times M$.
$\mathcal{M}$ is called a \emph{stochastic-update} strategy if the memory update $\mu$ may additionally randomize over $M$ (see~\cite{DBLP:journals/corr/abs-1104-3489} for details).
From the definition it is clear that $\mathcal{M}$ \emph{realizes} a strategy in the general form $\maxstrat \colon S^*\Smax \to \Dist(S)$.
We define the \emph{memory size} of $\maxstrat$ as the smallest $k \in \Naturals \cup \{\infty\}$ such that there exists a strategy automaton $\mathcal{M} = (M, \mu, \nu,m_0)$ with $|M|=k$ that realizes $\sigma$.
If $k = \infty$, then $\maxstrat$ is \emph{infinite-memory}, and otherwise \emph{finite-memory}.
If $k=1$, then $\maxstrat$ is called \emph{memoryless}.
An \emph{MD} strategy is both memoryless and deterministic.
The above definitions are analogous for the other player \Spoiler, interchanging $\Smax$ and $\Smin$.

Throughout the paper, we consistently denote \Achiever's strategies with $\sigma$ and \Spoiler's strategies with $\minstrat$.
We usually identify strategies with their realizing automata.
Given a strategy $\maxstrat$, a \emph{counter-strategy} $\minstrat$ is a strategy in the induced MDP $\game^\maxstrat$ and may depend on $\maxstrat$.

\subparagraph{Reachability-safety queries and determinacy.}
Given an MC $(S,s_0,P)$ and a set $T \subseteq S$, the \emph{reachability probability} of $T$ is $\probability(\reach T) \eqdef \sum_{\path \in Paths(T)} Pr(\path)$ where $Paths(T)$ is the set of finite paths $\path$ of the form $\path = \sinit s_1 \ldots s_k$ with $k \geq 0$, $s_k \in T$ and $s_i \notin T$ for all $i=0,\ldots,k-1$;
and  $Pr(\path) \eqdef \prod_{i=0}^{n-1}P(s_i,s_{i+1})$.
Dually, we define $\probability(\safe T) \eqdef 1 - \probability(\reach \overline{T})$, where $\overline{T} \eqdef S \setminus T$.
Intuitively, $\probability(\reach T)$ is the probability to eventually reach a \emph{target} state in $T$ and $\probability(\safe T)$ is the probability to stay forever within $T$, i.e., to avoid the \emph{unsafe set} $\overline{T}$.
We write $\probability^{\maxstrat, \minstrat}$ to emphasize that we consider the probability measure in the Markov chain $\game^{\maxstrat, \minstrat}$ induced by some strategies $\maxstrat, \minstrat$.

%
\begin{definition}[Disjunctive Queries~\cite{cfk13}]
    Given an SG $\game$ with state space $S$, an $n$-dimensional \emph{disjunctive query (DQ)} for $\game$ is an expression of the form
    $
    \query = \bigvee_{i=1}^n \probability(\genobj_i T_i) \geq x_i 
    $
    with $\genobj_i \in \{\reach, \safe\}$, $T_i \subseteq S$ for all $i = 1,\ldots,n$, and $\vec{x} \in (0,1]^n$ a threshold vector.
\end{definition}
Note that we only allow (non-strict) \emph{lower} bounds in DQs. This is w.l.o.g.\ as upper bounds on reachability or safety can be recast as lower bounds on the dual objective.
%
%
The standard semantics of a DQ is defined as follows~\cite{cfk13}: \Achiever\ can \emph{achieve}\footnote{We use the term ``achieve'' rather than ``win'' for consistency with previous works, e.g.\ \cite{EKVY08,cfk13,BF16}.} $\query$, or equivalently, $\query$ is \emph{achievable} in $\game$ if
$
    \exists \maxstrat \forall \minstrat \colon \bigvee_{i=1}^n \probability^{\maxstrat,\minstrat}(\genobj_i T_i) \geq x_i
$
where $\maxstrat$ ranges over all strategies of \Achiever, $\minstrat$ over those of \Spoiler\ and $\probability^{\maxstrat,\minstrat}$ is the probability measure of the induced Markov chain $\game^{\maxstrat,\minstrat}$.
A strategy of \Achiever\ witnessing achievability of $\query$ is called an \emph{achieving strategy}.
%
%
Note that the quantification order is such that \Achiever\ has to reveal her strategy to \Spoiler\ \emph{before} the game actually starts.
Since the games are not determined, this may be a disadvantage (see \cite{cfk13} or our example in the introduction).
Therefore, we also consider the alternative semantics obtained by swapping the quantification order.
We call this semantics \emph{asserted-exposure} ($\forall \exists$ for short).
In the $\forall \exists$-semantics, \Spoiler's strategy is \emph{exposed} to \Achiever\ \emph{before} the game begins, i.e., $\query$ is $\forall \exists$-achievable if
$
 \forall \minstrat \exists \maxstrat \colon \bigvee_{i=1}^n \probability^{\maxstrat,\minstrat}(\genobj_i T_i) \geq x_i
$
holds.
By definition, $\game$ is \emph{determined} for $\query$ iff the standard and the alternative $\forall \exists$-semantics coincide.

We will consider the following subclasses of DQs:
If $\genobj_i = \reach$ ($\genobj_i = \safe$) for all $i=1,\ldots,n$, then we call $\query$ a \emph{reachability} (\emph{safety}) DQ.
We say that $\query$ is \emph{mixed} to emphasize that it may contain both $\reach$ and $\safe$.
If $\vec{x} = (1,\ldots,1)$ then $\query$ is called \emph{qualitative} DQ, and otherwise \emph{quantitative} DQ.
If each state contained in a target/unsafe set is a sink then $\query$ is called a \emph{sink} DQ.
All of the above notions are defined analogously for \emph{conjunctive queries} (CQs).
%

\subparagraph{Pareto sets.}
If the threshold vector $\vec{x}$ in a query $\query$ has not been fixed, we can think of $\query$ as a \emph{query template}.
We define the set \emph{Pareto set} for query template $\query$ in a given SG $\game$ as
\[
    \val(\game,\query) ~\eqdef~ \{\, \vec{x} \in [0,1]^n \,\mid\, \exists \maxstrat \, \forall \minstrat \colon \bigvee_{i=1}^n \probability^{\maxstrat,\minstrat}(\genobj_i T_i) \geq x_i \,\}
\]
and similarly for CQs.
Further, for $k \geq 0$, we call $\val(\game^{\leq k}, \query)$ the \emph{horizon-$k$} Pareto set, i.e., the set of points achievable if the game runs for at most $k$ steps. 
We define $\upval(\game, \query)$ as the set of vectors achievable in the alternative $\forall\exists$-semantics.
Note that in general, $\val(\game, \query) \subseteq \upval(\game, \query)$, and equality holds iff $\game$ is determined for all $\query(\vec{x})$, $\vec{x} \in [0,1]^n$.
The Pareto sets $\val(\game, \query)$ and $\upval(\game, \query)$ generalize the notion of \emph{lower} and \emph{upper value} in single-dimensional games.
Indeed, they coincide for $n=1$ as single-dimensional SGs are determined~\cite{CondonCompl}.
Furthermore, $\val(\game, \query)$ is convex for CQs~\cite{cfk13}.

\subparagraph{Goal-unfolding.}

The following construction is folklore (e.g.\ \cite{FH10,cfk13}).
Given an SG $\game$ with states $S$ and an $n$-dimensional query $\query$, we define the \emph{goal-unfolding} $\unf(\game,\query)$ of $\game$ with respect to $\query$ as a game with state space $S \times \{0,1\}^n$.
Let $(s, \vec{v})$ be a state of the unfolding.
Intuitively, $\unf(\game,\query)$ remembers which targets/unsafe sets have already been visited during a specific play. 
This is encoded in the $n$-bit vector $\vec{v}$.
That is, if $\game$ transitions from $s$ to $t$, then in the unfolding $(s, \vec{v})$ moves to $(t, \vec{u})$ where $\vec{u}$ is obtained from $\vec{v}$ by setting all bits corresponding to the targets/unsafe sets containing $t$ to one. Accordingly, the initial state is $(\sinit, (0,\ldots,0))$.
If $\query$ is a sink query then the unfolding is trivial, i.e., equal to $\game$.
Strategies in $\unf(\game, \query)$ can be interpreted as strategies in $\game$ by incorporating the vectors $\vec{v}$ from the states of the unfolding into the memory of the strategy.

\section{Strategy Complexity}
\label{sec:strat_comp}

In this section we analyze the memory complexity of \Achiever's achieving strategies for DQs in terms of the query dimension, denoted $n$ in the following.
More formally, given a class $Q_n$ of DQs with $n$ objectives, we determine (or bound) a number $b(n)$ such that

\begin{enumerate}
    \item $b(n)$ memory is \emph{necessary} for the queries in $Q_n$, i.e., there exists a game $\game$ and query $\query \in Q_n$ such that $\query$ is achievable in $\game$ iff \Achiever\ may use at least $b(n)$ memory;
    \item $b(n)$ memory is \emph{sufficient} for the queries in $Q_n$, i.e., for all games $\game$ and $\query \in Q_n$, if $\query$ is achievable in $\game$, then \Achiever\ has an achieving strategy using at most $b(n)$ memory.
\end{enumerate}

To give a nuanced picture of the complexity, we distinguish the classes of qualitative vs.\ quantitative DQs, safety vs.\ reachability DQs and study the restriction to deterministic vs.\ general (randomized) strategies.
We stress that the latter distinction applies to both players, i.e., 
in the \emph{deterministic-strategies} case, neither \Spoiler\ nor \Achiever\ may play randomized strategies whereas
in the \emph{general-strategies} case, both players may follow arbitrary---possibly randomized---strategies.

We briefly recall the case of non-stochastic games with deterministic strategies for both players.
It was shown in~\cite[Lem.\ 1]{FH10} that ${n \choose \lfloor n/2 \rfloor}$ memory is necessary and sufficient for safety DQs.
Notably, this means that not the \emph{whole} goal-unfolding is needed (though an exponentially large fragment).
Reachability DQs, on the other hand, do not need memory at all in the purely deterministic setting because a DQ $\bigvee_{i=1}^n \probability(\reach T_i) \geq 1$ boils down to reaching the set $\bigcup_{i=1}^n T_i$, and MD strategies are sufficient for winning reachability games on finite graphs.

In the rest of the section we first treat the general-strategies case (Section \ref{sec:strat_comp_gen}) and then study the restriction to deterministic strategies (Section \ref{sec:strat_comp_det}).
Table~\ref{tab:scomp} summarizes the results.

\begin{theorem}
    The memory bounds in Table~\ref{tab:scomp} are correct.
\end{theorem}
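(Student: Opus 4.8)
The theorem bundles all the individual memory bounds, so the plan is to prove each entry of Table~\ref{tab:scomp} by a dedicated lemma, and for each entry to establish two directions: \emph{sufficiency} (every achievable query of that class has an achieving strategy of at most the stated size) and \emph{necessity} (an explicit family of games and queries witnessing that nothing smaller works). The work is organised along the three axes of the table --- qualitative vs.\ quantitative, reachability vs.\ safety vs.\ mixed, and deterministic vs.\ general strategies --- with the mixed and quantitative cases subsuming the others.

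For the sufficiency direction the uniform first step is the goal-unfolding recalled in Section~\ref{sec:prelims}: a mixed DQ $\query$ on $\game$ becomes a \emph{sink} DQ $\query'$ on $\unf(\game,\query)$, a game with at most $|S|\cdot 2^n$ states, and a strategy of memory $m$ in the unfolding induces one in $\game$ of memory $m$ times the number of reachable bit-vector configurations. It therefore suffices to (i) bound the memory needed for \emph{sink} DQs, and (ii) show that only a sub-exponential fragment of the $2^n$ configurations is ever relevant. For (i), the base cases are sink reachability and sink safety DQs: for \emph{qualitative} queries these collapse to single-objective (almost-sure) reachability resp.\ safety --- once \Achiever\ has ``committed'' to one target, an MD sub-strategy finishes the job, so constant memory suffices there; for \emph{quantitative} queries an achieving strategy may in addition have to \emph{randomise} over finitely many committed sub-behaviours fixed at the outset, and pinning down the exact bound per class is the technical core. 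For (ii), I would lift the collapsing argument of \cite[Lem.~1]{FH10} to the stochastic setting: the set of still-relevant objectives only shrinks along a play, a configuration with relevant-set $I$ can be merged with any reachable one whose relevant-set $J$ satisfies $J\subseteq I$ (more options available is never worse for \Achiever), and a Dilworth/antichain argument then caps the essential configurations at ${n \choose \lfloor n/2 \rfloor}$; the one new point over \cite{FH10} is to check that \Spoiler's randomisation does not force extra configurations, which I expect to follow from monotonicity of the horizon-$k$ Pareto sets under refinement of the relevant-set.

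For the necessity direction I would construct, per class, a family $(\game_n,\query_n)$ forcing the claimed bound $b(n)$. The ${n \choose \lfloor n/2 \rfloor}$ lower bounds are obtained from a stochastic version of the FH10 gadget in which \Spoiler\ discloses, one objective at a time, a middle-sized set of objectives he abandons, and \Achiever\ must answer with behaviour depending on exactly which antichain element occurred, so one memory state per middle-layer subset is unavoidable. The exponential and unbounded/infinite-memory lower bounds, which arise for the deterministic-strategies cases, are the delicate ones; I expect a pigeonhole/pumping argument --- fix a putative achieving deterministic strategy with $k$ memory states, run it in a game scaled to $k$, locate two prefixes of plays reaching the same memory state, and splice a \Spoiler\ counter-strategy that forces every achieved probability strictly below its threshold. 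For quantitative reachability under deterministic strategies this is essentially the mechanism behind the undecidability result of Section~\ref{sec:ccomp}, so the gadget can be shared, together with a limit argument exhibiting a game that \Achiever\ wins only with genuinely infinite memory.

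The main obstacle is the deterministic-strategies case, and within it the infinite-memory lower bound: making precise a single game in which \Achiever\ wins but no finite-memory deterministic strategy does requires controlling the achieved probabilities of \emph{all} finite-memory strategies simultaneously, typically via a limit construction where the threshold is met only in the limit. A secondary difficulty is transporting the purely combinatorial ${n \choose \lfloor n/2 \rfloor}$ collapse of \cite{FH10} to games with probabilistic states, i.e.\ showing that randomisation by either player never makes a non-antichain configuration indispensable.
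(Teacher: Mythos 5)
There is a genuine gap, and it sits at the heart of your sufficiency argument. You propose to lift the antichain collapse of \cite[Lem.~1]{FH10} to the stochastic setting and cap the number of essential goal-unfolding configurations at $\binom{n}{\lfloor n/2\rfloor}$. This cannot work: it contradicts the table's own quantitative entry under general strategies, which is $2^n-1$ and is \emph{tight}. The paper proves matching $2^n-1$ \emph{lower} bounds for quantitative reachability DQs (Lemma~\ref{lem:scomp-SS-R-quant}, via a refined analysis of the games of \cite[Lem.~2]{FH10} combined with a reduction from qualitative reachability CQs to quantitative reachability DQs, Lemma~\ref{lem:reduction-cq-dq}) and for quantitative safety DQs (Lemma~\ref{lem:scomp-SS-S-quant}, via a new three-stage construction). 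So in the stochastic quantitative case the \emph{full} unfolding is needed; the Dilworth-style merging of configurations with comparable relevant-sets is specific to qualitative non-stochastic safety games and fails once thresholds are quantitative, because \Achiever's optimal randomized continuation genuinely depends on which exact subset of objectives is still ``alive''. The correct upper bound of $2^n-1$ is obtained much more directly: reduce reachability/safety to expected reward in the goal-unfolding and invoke MD-sufficiency for expected-reward DQs \cite[Thm.~7]{cfk13} (Lemma~\ref{lem:strat-upper-bound}); no sub-exponential pruning is claimed or needed.

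Two further points. First, your base case for quantitative sink queries is left as ``the technical core'' without a bound, but this is exactly where the paper's MD-sufficiency result does the work (MD strategies suffice for sink queries); conversely, your claim that qualitative sink queries need only constant memory is true only under \emph{general} strategies (Lemma~\ref{lem:scomp-SS-M-qual}: a qualitative DQ is achieved iff some single objective is achieved almost surely) --- under deterministic strategies it is false even for sink queries (Example~\ref{ex:det_strat_mem}), which is precisely why the $\binom{n}{n/2}$ lower bound of Lemma~\ref{lem:scomp-SD-SR-qual} applies there. Second, your infinite-memory lower bound is the right idea (pigeonhole on a finite strategy automaton) but needs a concrete witness: the paper uses an MDP in which achieving a CQ forces the strategy to spell out an aperiodic binary expansion $x=(0.1^101^201^30\ldots)_2$ (Lemma~\ref{lem:mdp-inf-mem}), and then transfers this to DQs via the deterministic-strategies CQ-to-DQ reduction (Lemma~\ref{lem:reduction-cq-dq-quant}); a generic ``limit construction'' does not by itself rule out all finite-memory strategies. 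The CQ-to-DQ reductions are the paper's main transfer tool for lower bounds and are absent from your outline.
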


\begin{table}[t]
    \caption{
        Memory requirements for \Achiever\ in terms of the number $n$ of objectives in the DQ.
        The lower bounds in column \emph{``SG with deterministic strats.''} apply already to sink queries.
    }
    \label{tab:scomp}
    \centering
    {
        \setlength\tabcolsep{6pt}
        \def\arraystretch{1.1}
        \begin{tabular}{l  c  c  c  c   c c}
            \toprule
            
           & \multicolumn{2}{c}{\emph{SG with general strats.}} & \multicolumn{2}{c}{\emph{SG with deterministic strats.}} & \multicolumn{2}{c}{\emph{Non-SG with deterministic strats.}} \\
            
            & \multicolumn{2}{c}{$\safe$ /  $\reach$} & \multicolumn{2}{c}{$\safe$ /  $\reach$} & $\safe$ & $\reach$\\
            
            \midrule
            
            \emph{Qual.}   & \multicolumn{2}{c}{none~[Lem.~\ref{lem:scomp-SS-M-qual}]} & \multicolumn{2}{c}{$\geq {n \choose n/2}$ [Lem.~\ref{lem:scomp-SD-SR-qual}]} & ${n \choose \lfloor n/2 \rfloor}$~\cite{FH10} & none~[trivial] \\
            
            \midrule
            
            \emph{Quant.}  & \multicolumn{2}{c}{$2^n - 1$ [Lem.~\ref{lem:strat-upper-bound},~\ref{lem:scomp-SS-R-quant},~\ref{lem:scomp-SS-S-quant}]} & \multicolumn{2}{c}{$\infty$~[Cor.~\ref{cor:scomp:SD-SR-quant}]} & \multicolumn{2}{c}{---\emph{not applicable}---} \\
            
            \bottomrule
        \end{tabular}
    } 
\end{table}

\subsection{Strategy Complexity under General Strategies}
\label{sec:strat_comp_gen}

Recall from above that whenever we say that a certain class of strategies are ``sufficient'' or ``necessary'', we are implicitly assuming that \Achiever\ actually has a winning strategy.


\begin{restatable}{lemma}{lemstratupperbound}
    \label{lem:strat-upper-bound}
    In the general-strategies case, deterministic strategies with at most $2^n{-}1$ memory states suffice for mixed quantitative DQs. Moreover, MD strategies are sufficient for sink queries.
\end{restatable}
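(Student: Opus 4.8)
The statement has two parts; I would prove the \emph{sink} case first and then bootstrap to the general case via the goal-unfolding, finally trimming one memory state off the naive $2^n$ bound.

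\smallskip
\emph{Sink queries.} Fix a sink DQ $\query = \bigvee_{i=1}^n \probability(\genobj_i T_i) \ge x_i$ and an arbitrary achieving strategy $\maxstrat$ for \Achiever. Rewrite each disjunct as a bound on a reachability probability of a \emph{set of sinks}: take $R_i \eqdef T_i$ with bound ``$\probability(\reach R_i)\ge x_i$'' when $\genobj_i=\reach$, and $R_i\eqdef\overline{T_i}$ with bound ``$\probability(\reach R_i)\le 1{-}x_i$'' when $\genobj_i=\safe$. For \emph{fixed} $\maxstrat$, the induced object $\game^{\maxstrat}$ is a (possibly countably infinite) MDP controlled by \Spoiler, and the set $V(\maxstrat)\eqdef\{(\probability^{\maxstrat,\minstrat}(\reach R_i))_{i}\mid\minstrat\}\subseteq[0,1]^n$ is \emph{convex}: \Spoiler\ realises any convex combination of two of his strategies by tossing a biased coin initially and then committing. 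Let $B\eqdef\prod_{\genobj_i=\reach}[0,x_i)\times\prod_{\genobj_i=\safe}(1{-}x_i,1]$ be the open box of outcome vectors violating \emph{all} disjuncts. Since $\maxstrat$ achieves $\query$, the closure $\overline{V(\maxstrat)}$ is disjoint from $B$, so the two convex sets can be properly separated: there are $\vec n\in\Reals^n\setminus\{\vec 0\}$, $d\in\Reals$ with $\vec n\cdot\vec v\ge d$ on $\overline{V(\maxstrat)}$ and $\vec n\cdot\vec b\le d$ on $\overline{B}$. Now consider the \emph{single-objective} SG in which \Achiever\ maximises the weighted reachability payoff $\sum_i n_i\probability(\reach R_i)=\vec n\cdot\vec p$ (equivalently, the expected reward $\sum_{i:\,R_i\ \mathrm{reached}} n_i$ collected upon absorption). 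MD strategies are optimal for \Achiever\ in such games, and $\maxstrat$ witnesses that the value is $\ge d$; hence an MD-optimal $\maxstrat^{\ast}$ guarantees $\vec n\cdot\vec p^{\maxstrat^{\ast},\minstrat}\ge d$ for every $\minstrat$. Finally, $\vec n\cdot\vec p\ge d\ge\sup_{\vec b\in\overline B}\vec n\cdot\vec b$ forces $\vec p\notin B$ (a point of $B=\mathrm{int}(\overline B)$ cannot lie on a supporting hyperplane of $\overline B$), and any $\vec p\in\overline B\setminus B$ has some coordinate exactly at its threshold, i.e.\ satisfies one disjunct with equality. So $\maxstrat^{\ast}$ achieves $\query$, proving the second part of the lemma.

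\smallskip
\emph{General queries and the memory bound.} For an arbitrary mixed DQ, pass to the goal-unfolding $\unf(\game,\query)$. There each reachability target becomes the absorbing set $\{v_i=1\}$, and each safety objective amounts to staying in $\{v_i=0\}$, whose complement is again absorbing; hence the sink-case argument goes through verbatim, the scalarised objective now being a bounded total-reward objective in which each of the $n$ rewards is collected at most once (still MD-solvable for \Achiever). This yields an MD achieving strategy on $\unf(\game,\query)$ which—reading the goal vector as memory, per the folklore correspondence from Section~\ref{sec:prelims}—is a \emph{deterministic} achieving strategy on $\game$ with at most $2^n$ memory states. To get $2^n{-}1$, observe that the goal vector $(1,\dots,1)$ is \emph{trivial}: on any play carrying it every reachability objective already has value $1$ and every safety objective already has value $0$, so \Achiever's choices from configurations with this vector affect \emph{none} of the probabilities $\probability(\genobj_i T_i)$. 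Identifying this memory state with an arbitrary other one therefore only alters the strategy on plays that have already reached it, leaving all these probabilities—hence achievability—unchanged.

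\smallskip
\emph{Main obstacle.} The crux is the sink case, resting on two ingredients: (i) convexity of \Spoiler's attainable outcome vectors, which genuinely needs randomised \Spoiler\ strategies and a closure argument since $\game^{\maxstrat}$ may be infinite when $\maxstrat$ has infinite memory; and (ii) MD-optimality for \Achiever\ in the scalarised weighted-reachability / finite-total-reward game—standard, but pinning down the citation and verifying that the goal-unfolded instance fits this setting requires care. The boundary bookkeeping for $\overline B\setminus B$ in the separation step is elementary but should be spelled out.
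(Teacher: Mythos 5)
Your overall architecture---prove the sink case directly, lift it through the goal-unfolding, and merge the all-ones memory state to shave $2^n$ down to $2^n{-}1$---is a genuinely different and more self-contained route than the paper's, which obtains the sink case by black-boxing the MD-sufficiency result for expected-reward DQs (\cite[Theorem~7]{cfk13}) and reducing reachability/safety to expected rewards inside the unfolding. The unfolding step and the trimming argument in your proposal are fine, as are the convexity of $V(\maxstrat)$ and (modulo a citation) MD-optimality for the scalarized terminal-reward game. The problem is the final step of your sink-case argument.

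You conclude from $\vec{n}\cdot\vec{p}\ge d\ge\sup_{\vec{b}\in\overline B}\vec{n}\cdot\vec{b}$ that $\vec{p}\notin B$, justified by ``$B=\mathrm{int}(\overline B)$''. That identity is false: $B=\prod_{\genobj_i=\reach}[0,x_i)\times\prod_{\genobj_i=\safe}(1{-}x_i,1]$ contains boundary points of $\overline B$ (those with $p_i=0$ on a reachability coordinate or $p_i=1$ on a safety coordinate), so a maximizer of $\vec{n}\cdot{}$ over the box $\overline B$ can perfectly well lie in $B$. Concretely, with $n=2$ and both objectives reachability, suppose $\overline{V(\maxstrat)}=\{(0,x_2)\}$ (Eve achieves the second disjunct with equality and never reaches $R_1$). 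Then $\vec{n}=(-1,0)$, $d=0$ is a valid, even proper, separating hyperplane, but the induced scalarization only asks Eve to minimize $\probability(\reach R_1)$; in a game where she can additionally choose to reach nothing at all, that choice is also optimal for this scalarization yet yields the outcome $(0,0)\in B$, so the MD-optimal strategy need not achieve the DQ. Hence the argument cannot use an \emph{arbitrary} separating hyperplane. What you need is that the normal can be chosen sign-consistent with the objectives: $n_i\ge 0$ on reachability coordinates and $n_i\le 0$ on safety coordinates (in your $\probability(\reach R_i)$ parametrization), not all zero; then the maximizing face of $\overline B$ pins some coordinate to $x_i$ (resp.\ $1{-}x_i$), is disjoint from $B$, and your conclusion goes through. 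This sign-consistent separation is true---it follows, e.g., from a minimax argument over the simplex of weight vectors against the compact convex set $\overline{V(\maxstrat)}$, using that a vector with $\max_i$ of the (suitably oriented) slacks negative would lie in $B$---but it is an additional lemma you must state and prove; it does not follow from generic proper separation of two disjoint convex sets.
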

\begin{proof}[Proof (sketch)]
    By \cite[Theorem 7]{cfk13}, MD strategies are sufficient for quantitative DQs with \emph{expected reward} objectives (see the formal definition in~\cite{cfk13}).
    We reduce reachability and safety to expected reward in the goal-unfolding \iftoggle{arxiv}{(Appendix~\ref{app:proof-strat-upper-bound})}{\cite{arxiv}}.
    The resulting MD strategy in the unfolding corresponds to a strategy of Eve with at most $2^n{-}1$ memory.
\end{proof}

Recall that even though Lemma~\ref{lem:strat-upper-bound} implies that deterministic strategies are sufficient for \Achiever, \Spoiler\ may still use randomization.
In fact, we show in Corollary~\ref{cor:scomp:SD-SR-quant} in Section \ref{sec:strat_comp_det} that Lemma~\ref{lem:strat-upper-bound} does \emph{not} hold in the deterministic-strategies case where both players are forced to follow a deterministic strategy.
Next we either improve the bound from Lemma~\ref{lem:strat-upper-bound} or prove matching lower bounds.
We consider qualitative DQs first.

\begin{restatable}{lemma}{lemscompSSMqual}
    \label{lem:scomp-SS-M-qual}
    In the general-strategies case, MD strategies suffice for mixed qualitative DQs.
\end{restatable}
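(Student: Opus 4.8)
The plan is to show that, in the general-strategies case, a \emph{qualitative} mixed DQ $\query = \bigvee_{i=1}^n \probability(\genobj_i T_i) \ge 1$ is achievable in a game $\game$ if and only if \emph{one of its disjuncts}, read as a single-objective query, is achievable in $\game$. This suffices: it is folklore that MD strategies are sufficient for single almost-sure reachability in SGs (the classical almost-sure reachability game) and for single almost-sure safety in SGs ($\probability(\safe T)\ge 1$ amounts to keeping the play inside a greatest-fixpoint region $W\subseteq T$, i.e.\ to an ordinary turn-based safety game in which \Spoiler's states \emph{and} the probabilistic states must stay in $W$ under \emph{all} their successors). Hence, if $\query$ is achievable, \Achiever\ can pick a disjunct $i^*$ that is achievable alone and play an MD strategy witnessing $\probability(\genobj_{i^*} T_{i^*}) \ge 1$; this strategy a fortiori achieves $\query$. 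In particular this sharpens Lemma~\ref{lem:strat-upper-bound} from $2^n{-}1$ to $1$ in the qualitative case.

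The only non-trivial direction is ``$\query$ achievable $\Rightarrow$ some disjunct achievable''. I would fix an achieving strategy $\maxstrat$ of \Achiever\ for $\query$ and show that $\maxstrat$ \emph{itself} already achieves some disjunct, i.e.\ that there is an $i^*$ with $\probability^{\maxstrat,\minstrat}(\genobj_{i^*} T_{i^*}) \ge 1$ for \emph{every} \Spoiler\ strategy $\minstrat$. Assume towards a contradiction that this fails: then for each $i \in \{1,\dots,n\}$ there is a counter-strategy $\minstrat_i$ of \Spoiler\ with $\probability^{\maxstrat,\minstrat_i}(\genobj_i T_i) < 1$. Let $\minstrat$ be the \Spoiler\ strategy that initially draws $j \in \{1,\dots,n\}$ uniformly at random and from then on plays according to $\minstrat_j$. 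Then $\game^{\maxstrat,\minstrat}$ is, by construction, the uniform mixture of the Markov chains $\game^{\maxstrat,\minstrat_1},\dots,\game^{\maxstrat,\minstrat_n}$, so the law of total probability gives, for every $i$,
\[
  \probability^{\maxstrat,\minstrat}(\genobj_i T_i) \;=\; \tfrac1n\sum_{j=1}^n \probability^{\maxstrat,\minstrat_j}(\genobj_i T_i) \;\le\; \tfrac1n\bigl((n{-}1) + \probability^{\maxstrat,\minstrat_i}(\genobj_i T_i)\bigr) \;<\; 1 .
\]
Thus no disjunct holds in $\game^{\maxstrat,\minstrat}$, contradicting that $\maxstrat$ achieves $\query$. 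Combined with the folklore MD bounds above (applied to the single objective $\genobj_{i^*} T_{i^*}$), this proves the lemma.

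The main point requiring care --- and the conceptual heart of the argument --- is the ``commit-mixture'' strategy $\minstrat$: I would make explicit that it is a legal \Spoiler\ strategy, most cleanly a stochastic-update strategy automaton whose initial memory update randomizes uniformly over $n$ disjoint copies of the automata realizing $\minstrat_1,\dots,\minstrat_n$ (equivalently, by Kuhn's theorem for players with perfect recall, a behavioural strategy with the same outcome distribution). This is exactly where \Spoiler's ability to \emph{randomize} is essential: the same reasoning breaks down if both players are restricted to deterministic strategies, consistent with the strictly larger lower bound in the corresponding column of Table~\ref{tab:scomp}. Everything else (the trivial converse direction, and the standard fact that MD strategies suffice for single almost-sure reachability and safety in SGs) is routine.
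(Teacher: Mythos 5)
Your proposal is correct and follows essentially the same route as the paper: the key step in both is the claim that a strategy of \Achiever\ achieves the qualitative DQ iff it achieves at least one disjunct as a single objective, after which MD sufficiency for single almost-sure reachability/safety finishes the argument. Your uniform ``commit-mixture'' counter-strategy (made legal via stochastic update / Kuhn's theorem) is a valid proof of that claim, and you correctly flag that this is precisely where \Spoiler's randomization is essential, matching the paper's remark that the equivalence fails in the deterministic-strategies and quantitative settings.
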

\begin{proof}[Proof (sketch)]
    It can be shown that a given strategy of \Achiever\ achieves $\bigvee_{i=1}^n \probability(\genobj_i T_i) \geq 1$ iff it achieves the single objective $\probability(\genobj_i T_i) \geq 1$ for at least one $1 \leq i \leq n$ \iftoggle{arxiv}{(Appendix~\ref{app:proof-scomp-SS-M-qual})}{\cite{arxiv}}.
    For the latter, MD strategies are sufficient~\cite{Condon90}.
\end{proof}

The qualitative bounds (``$\geq 1$'') are crucial in the previous proof.
Indeed, the equivalence in the above proof sketch does \emph{not} hold for quantitative bounds (a minimal counter-example is the game in Fig.~\ref{fig:introExample} started from $s_2$ with bounds $\geq \nicefrac{1}{2}$ for both objectives).
For the quantitative setting where arbitrary bounds are allowed, the following observation relates conjunctive and disjunctive queries and enables us to reuse some known results about CQs.
Intuitively, it states that qualitative reachability CQs can be reduced to quantitative (qualitative) DQs in the case of general (respectively deterministic) strategies.

\begin{restatable}{lemma}{lemreductioncqdq}
    \label{lem:reduction-cq-dq}
    For every SG $\game$ with a qualitative reachability CQ $\query = \bigwedge_{i=1}^n \probability(\reach T_i) \geq 1$, there exists a game $\game'$ (where \Achiever's strategies $\maxstrat$ are in one-to-one correspondence) and a
    \begin{enumerate}
        \item \emph{quantitative} reachability DQ $\query'_{quant}$ such that $\sigma$ achieves $\query$ in $\game$ iff $\sigma$ achieves $\query'_{quant}$ in $\game'$ under \emph{general} strategies;
        \item \emph{qualitative} reachability DQ $\query'_{qual}$ such that $\sigma$ achieves $\query$ iff $\sigma$ achieves $\query'_{qual}$ under \emph{deterministic} strategies.
    \end{enumerate}
\end{restatable}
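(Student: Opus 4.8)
The plan is to let $\game'$ be, essentially, the goal-unfolding $\unf(\game,\query)$ augmented by a small gadget. Recall that $\unf(\game,\query)$ tracks in an $n$-bit vector $\vec v$ which of the targets $T_1,\dots,T_n$ have already been visited; write $G \eqdef \{(s,\vec v)\mid \vec v = (1,\dots,1)\}$ for the set of states in which all targets have been seen. The gadget sends every state of $G$ (by one forced transition) to a fresh Adam-controlled state $a$, from which Adam moves to one of $n$ fresh sink states $\hat t_1,\dots,\hat t_n$. In $\query'$ the $i$-th disjunct targets $\{\hat t_i\}$, so $\query'$ is a reachability (in fact sink) DQ. Since $\vec v$ is a deterministic function of the play prefix in $\game$, Eve's strategies in $\game'$ are in natural correspondence with her strategies in $\game$: a strategy automaton for $\game$ becomes one for $\game'$ by ignoring the $\vec v$-components of states, and this correspondence is memory-preserving in that direction.

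The first step is the identity $\probability^{\maxstrat,\minstrat}\big(\bigwedge_{i=1}^n \reach T_i\big) = \probability^{\maxstrat,\minstrat}(\reach G)$, valid for every fixed strategy pair; since a finite intersection of probability-one events again has probability one, this yields that $\maxstrat$ achieves the CQ $\query$ in $\game$ iff $\maxstrat$ reaches $G$ almost surely in $\game'$ against every counter-strategy. It then remains to express ``reach $G$ almost surely'' as a disjunction over the sinks $\hat t_i$; here the key point is that all probability mass reaching $G$ must be redistributed over $\hat t_1,\dots,\hat t_n$ at the state $a$.

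For part~(1) I would take $\query'_{quant} \eqdef \bigvee_{i=1}^n \probability(\reach\hat t_i)\geq \tfrac1n$. If $\maxstrat$ achieves $\query$, probability $1$ reaches $a$, and however (possibly randomly) Adam distributes there, some $\hat t_i$ receives at least $\tfrac1n$ by pigeonhole, so $\query'_{quant}$ holds. Conversely, if $\maxstrat$ fails $\query$, then Adam has a counter-strategy reaching $G$ with probability $p<1$; combining it with a uniform split at $a$ gives $\probability(\reach\hat t_i)=p/n<\tfrac1n$ for all $i$, refuting $\query'_{quant}$. The backward direction crucially uses that Adam may randomize, i.e.\ the general-strategies regime. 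For part~(2) I would take $\query'_{qual}\eqdef\bigvee_{i=1}^n\probability(\reach\hat t_i)\geq 1$: under deterministic strategies Adam cannot fractionally spread mass, so once the gadget forces all mass entering $a$ onto a single sink, $\maxstrat$ achieves $\query$ iff it reaches $a$ (equivalently $G$) almost surely iff some $\hat t_i$ is reached with probability $1$ iff $\query'_{qual}$ holds.

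The step I expect to be the main obstacle is exactly this last one: a deterministic Adam could in principle still route the mass that flows into $a$ to different sinks depending on the history by which $a$ was entered, so the gadget must be arranged so that this is provably impossible without letting Eve exploit the extra structure. Getting this interaction right -- together with checking that the strategy correspondence and the induced memory bookkeeping are tight enough to transfer the known memory bounds for qualitative reachability CQs to the respective DQ classes -- is the crux of the argument.
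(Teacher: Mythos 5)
Your construction is genuinely different from the paper's, and part~(2) of your argument has a real gap exactly at the point you flag as ``the main obstacle'': it is not merely a technicality, and your gadget does not overcome it. Because you place \Spoiler's target-selection state $a$ \emph{after} the play through (the unfolding of) $\game$, the state $a$ is in general reached along several distinct histories, each carrying positive probability. A \emph{deterministic} strategy of \Spoiler\ may still depend on the history, so he can route different histories entering $a$ to different sinks $\hat t_i$ and thereby split the unit mass. Concretely, take $\game$ with a probabilistic initial state moving to two states $u,v$ with probability $\nicefrac12$ each, both belonging to every $T_i$: the CQ is trivially achieved, yet \Spoiler\ sends the $u$-history to $\hat t_1$ and the $v$-history to $\hat t_2$, so no $\hat t_i$ is reached with probability $1$ and $\query'_{qual}$ fails. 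The paper avoids this by inverting the order: a fresh \emph{initial} probabilistic state plays the unchanged $\game$ with probability $\nicefrac12$ and otherwise moves to an \Spoiler-state from which he picks one of $T_1,\ldots,T_n$. That choice state has a \emph{unique} incoming history, so a deterministic \Spoiler\ must commit all $\nicefrac12$ of the branch mass to a single target, and the qualitative DQ $\bigvee_i \probability(\reach T_i)\geq 1$ (respectively the quantitative DQ with thresholds $\frac12+\frac1{2n}$, by the same pigeonhole/uniform-split reasoning you use) becomes equivalent to the CQ.

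Part~(1) of your argument is sound as far as the literal ``achieves iff achieves'' equivalence goes (the reduction of the qualitative CQ to almost-sure reachability of $G$ in the unfolding, the pigeonhole bound $\nicefrac1n$, and the uniform split for the converse are all correct), but basing $\game'$ on the goal-unfolding defeats the two purposes for which the lemma is used. First, the unfolding is exponential in $n$, so the reduction is no longer polynomial-time, which is what Lemma~\ref{lem:comp-SS-R-quant} needs to inherit $\PSPACE$-hardness from \cite{RRS17}. Second, the unfolding hands \Achiever\ the visited-targets vector $\vec v$ as part of the state: reaching $G$ almost surely is a single reachability objective in $\unf(\game,\query)$, for which MD strategies suffice, so an achieving strategy for $\query'_{quant}$ in your $\game'$ can be memoryless even when every achieving strategy for $\query$ in $\game$ needs $2^n-1$ memory. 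The correspondence of strategies as abstract history functions survives, but the memory lower bound of Lemma~\ref{lem:scomp-SS-R-quant} does not transfer. The paper's prefix gadget adds only $O(n)$ states in front of the \emph{unchanged} $\game$, so \Achiever's decision states, histories, and hence memory requirements are literally identical in $\game$ and $\game'$.
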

\begin{proof}[Proof (sketch)]
    $\game'$ is constructed as follows:
    The original $\game$ is only played with probability $\nicefrac{1}{2}$.
    With the remaining $\nicefrac 1 2$, \Spoiler\ freely chooses one of the $n$ targets $T_1,\ldots,T_n$, after which the game ends immediately (Figure~\ref{fig:scomp-SS-S-quant}, left).
    The following can be readily verified \iftoggle{arxiv}{(Appendix~\ref{app:proof-reduction-cq-dq})}{(see~\cite{arxiv})}:
    In the deterministic-strategies case, each strategy $\maxstrat$ of \Achiever\ achieves the CQ $\query$ in $\game$ iff $\maxstrat$ achieves the 
    \emph{qualitative} DQ $\bigvee_{i=1}^n \probability(\reach T_i)\geq1 $ in $\game'$.
    Otherwise, if randomization is allowed, then $\maxstrat$ achieves $\query$ in $\game$ iff it achieves the \emph{quantitative} DQ $\bigvee_{i=1}^n \probability(\reach T_i)\geq \frac{1}{2} + \frac{1}{2n}$ in $\game'$.
\end{proof}

\begin{figure}[t]
    \centering
    \begin{tikzpicture}[node distance = 8mm and 15mm, on grid, initial text=,myArrowStyle]
        \node[] (gstart) {$\game$};
        
        \node[state, initial, prob, above left=of gstart, yshift=0mm] (start) {};
        \node[state, min, above right=of start, yshift=0mm] (select) {};
        \node[state, prob, target, col1, above right=of select,yshift=-1mm] (T1) {\scriptsize$T_1$};
        \node[ right=of select,yshift=1mm,scale=.7] (dots) { $\vdots$};
        \node[state, prob, target, col2, below right=of select,yshift=1mm] (T2) {\scriptsize$T_n$};
        
        \draw[trans] (start) -- node[below left]{\scriptsize $\nicefrac 1 2$} (gstart);
        \draw[trans] (start) -- node[above left]{\scriptsize$\nicefrac 1 2$} (select);
        \draw[trans] (select) -- (T1);
        \draw[trans] (select) -- (T2);
        
        \node[above=5mm of gstart, xshift=-5mm,inner sep=0] (ul) {};
        \node[above=5mm of gstart, xshift=10mm,inner sep=0] (ur) {};
        \node[below=5mm of gstart, xshift=-5mm,inner sep=0] (ll) {};
        \node[below=5mm of gstart, xshift=10mm,inner sep=0] (lr) {};
        \draw[black,dashed,-] (ul) -- (ur) -- (lr) -- (ll) -- (ul);
        
        \node[below = 14mm of start] {};
    \end{tikzpicture}
        \hspace{2cm}
    \begin{tikzpicture}[node distance=20mm and 25mm, initial text=start,on grid,initial where=below,myArrowStyle]
        \node[circle,draw,initial,align=center,inner sep=1mm] (A) {\scriptsize A: Env.\ \\ \scriptsize chooses $\targets_1$};
        \node[rectangle,draw=black,right =of A,align=center,inner sep=3mm] (B) {\scriptsize B: \Achiever\ \\ \scriptsize chooses $\targets_2$};
        \node[state,prob,below=of B] (penalty) {};
        \node[state,max,target,right=18mm of penalty] (sink) {\scriptsize $\targets$};
        \node[diamond,draw,aspect=1.3,right=28mm of B,align=center,inner sep=-7pt] (C) {\scriptsize  C: \Spoiler\ \\ \scriptsize chooses $n{-}|\targets_2|{-}1$ \\  \scriptsize targets};
        
        \draw[trans,densely dashed] (A) -- (B);
        \draw[trans,densely dashed] (B) -- (penalty);
        \draw[trans] (penalty) -- node[above] {\scriptsize$p(\targets_2)$}(sink);
        \draw[trans] (penalty) edge node[sloped,above] {\scriptsize$1- p(\targets_2)$}(C);
    \end{tikzpicture}
    \caption{
        Left:
        Reduction of qualitative CQs to DQs (Lemma~\ref{lem:reduction-cq-dq}).
        Right:
        Game constructed in the proof of Lemma~\ref{lem:scomp-SS-S-quant}.
        The probability $p(\targets_2)$ increases with $|\targets_2|$.
        \Achiever\ must select exactly $\targets_2 = \targets_1$ in Stage B to avoid visiting at least one of the targets in $\targets$ with maximal probability.
    }
    \label{fig:scomp-SS-S-quant}
\end{figure}
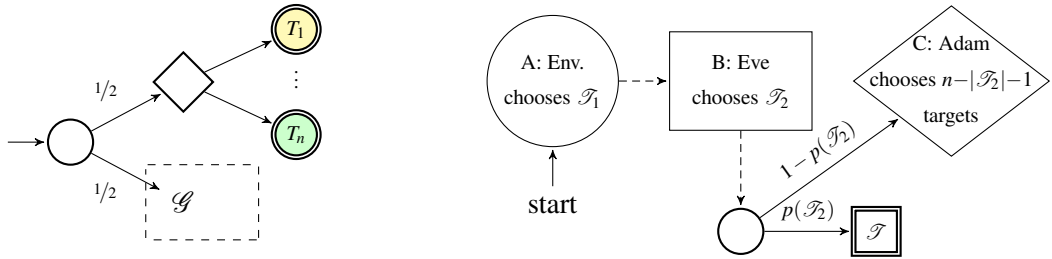

With Lemma~\ref{lem:reduction-cq-dq} and a result of~\cite{FH10} we obtain a lower bound for quantitative reachability:

\begin{restatable}{lemma}{lemscompSSRquant}
    \label{lem:scomp-SS-R-quant}
    In the general-strategies case, quantitative \emph{reachability} DQs need $2^n - 1$ memory.
\end{restatable}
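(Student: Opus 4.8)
The plan is to push a known memory lower bound for \emph{qualitative reachability CQs} through the reduction of Lemma~\ref{lem:reduction-cq-dq}(1), exploiting the fact that this reduction preserves \Achiever's memory.

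First I would recall that \emph{generalized reachability games}---non-stochastic turn-based games in which \Achiever\ must visit each of $n$ target sets $T_1,\ldots,T_n$ at least once---are exactly non-stochastic SGs $\game$ (i.e.\ every state in $\Sprob$ is a sink) equipped with the qualitative reachability CQ $\query = \bigwedge_{i=1}^n \probability(\reach T_i) \geq 1$: under any strategy pair the induced Markov chain carries a single play, and $\probability^{\sigma,\tau}(\reach T_i) \geq 1$ holds iff that play visits $T_i$. Moreover, for such a qualitative reachability condition in a non-stochastic game it is irrelevant whether \Spoiler\ may randomize, since every play in the support of a randomized counter-strategy is consistent with some deterministic one. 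By a result of~\cite{FH10}, for every $n$ there is a generalized reachability game---hence a non-stochastic SG $\game_n$ with a qualitative reachability CQ $\query_n$---such that $\query_n$ is achievable in $\game_n$ but every achieving strategy of \Achiever\ uses at least $2^n - 1$ memory states.

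Next I would apply Lemma~\ref{lem:reduction-cq-dq}(1) to $\game_n$ and $\query_n$. This yields a finite SG $\game_n'$ and the quantitative reachability DQ $\query'_{quant} = \bigvee_{i=1}^n \probability(\reach T_i) \geq \frac{1}{2} + \frac{1}{2n}$ (all options are $\reach$ and the threshold lies strictly below $1$, so this is indeed a quantitative reachability DQ), together with a bijection between \Achiever's strategies in $\game_n$ and in $\game_n'$ under which $\sigma$ achieves $\query_n$ in $\game_n$ iff $\sigma$ achieves $\query'_{quant}$ in $\game_n'$ in the general-strategies setting. Since the only additional structure in $\game_n'$ is a fresh probabilistic initial state, one \Spoiler-controlled selection state, and $n$ new sink targets---none of which ever requires \Achiever\ to make a choice---the correspondence is memory-preserving in both directions. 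Therefore $\query'_{quant}$ is achievable in $\game_n'$, and every achieving strategy of \Achiever\ needs at least $2^n - 1$ memory states. Together with the matching upper bound of Lemma~\ref{lem:strat-upper-bound}, this proves the claim.

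The one delicate point is the final transfer: one has to verify that the strategy correspondence supplied by Lemma~\ref{lem:reduction-cq-dq} admits no ``shortcut'' in memory---that is, that an achieving strategy for $\query'_{quant}$ in $\game_n'$ cannot be realized with fewer memory states than the corresponding achieving strategy for $\query_n$ in $\game_n$---so that the \emph{full} $2^n - 1$ lower bound, not merely an exponential one, is inherited. Because the gadget added by the reduction lies entirely outside \Achiever's control, this is immediate, but it is the step that should be stated with care; the remainder is routine bookkeeping about $\game_n'$ being a legal finite SG.
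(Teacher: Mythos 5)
Your overall route is the same as the paper's: take the family $(\game_n)_{n\geq 1}$ from \cite[Lem.~2]{FH10} witnessing a $2^n-1$ memory lower bound for the qualitative reachability CQ, and push it through the memory-preserving reduction of Lemma~\ref{lem:reduction-cq-dq}(1). The transfer step you flag as ``delicate'' is indeed fine for the reason you give. However, there is a genuine gap earlier, at the point where you import the lower bound from \cite{FH10}.

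The bound in \cite[Lem.~2]{FH10} is proven for \emph{deterministic} strategies of \Achiever, whereas the present lemma is about the general-strategies case, in which \Achiever\ may randomize. Your justification collapses exactly here: you assert that ``under any strategy pair the induced Markov chain carries a single play,'' but this is false once $\maxstrat$ randomizes, and you only argue that \emph{\Spoiler's} randomization is harmless --- which is not the relevant player for a lower bound on \emph{\Achiever's} memory. One must rule out that a randomized strategy with fewer than $2^n-1$ memory states achieves the CQ in $\game_n$, and this does not follow formally from \cite{FH10}. The paper stresses that this step is non-trivial and in general fails: in the MDP with $\Smax=\{s_0\}$, $\Sprob=\{t_1,t_2\}$, $P(t_1,s_0)=P(t_2,s_0)=1$, $\choices(s_0)=\{t_1,t_2\}$, $T_1=\{t_1\}$, $T_2=\{t_2\}$, the qualitative CQ $\probability(\reach T_1)\geq 1 \wedge \probability(\reach T_2)\geq 1$ is achieved by a \emph{memoryless randomized} strategy but by no MD strategy, so randomization can genuinely be traded for memory in precisely this kind of objective. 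Accordingly, the paper's proof does not cite \cite{FH10} as a black box but performs a more detailed analysis of the specific games $\game_n$ to show that even randomized strategies of \Achiever\ require $2^n-1$ memory there. Your proposal needs that additional argument; without it the lower bound you feed into the reduction is not established in the setting the lemma claims.
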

\begin{proof}[Proof (sketch)]
    \cite[Lem. 2]{FH10} describes a family of (deterministic) games $(\game_n)_{n \geq 1}$ where \Achiever\ needs $2^n - 1$ memory to visit all $n$ targets under \emph{deterministic} strategies, i.e., to achieve a \emph{qualitative reachability CQ}.
    A more detailed analysis of the games $\game_n$ shows that even if both players may use general strategies, \Achiever\ still needs $2^{n}-1$ memory \iftoggle{arxiv}{(Appendix~\ref{app:proof-scomp-SS-R-quant})}{\cite{arxiv}}.
    We stress that this is non-trivial: in some cases, memory \emph{can} be traded for randomization to achieve qualitative DQs.
    A minimal example is an MDP with $\Smax = \{s_0\}$, $\Sprob = \{t_1,t_2\}$, $P(t_1,s_0) = P(t_2,s_0) =1$, $\choices(s_0) = \{t_1,t_2\}$, $T_1 = \{t_1\}$, $T_2 = \{t_2\}$; the CQ $\probability(\reach T_1) \geq 1 \wedge \probability(\reach T_2) \geq 1$ is achievable by a memoryless randomized strategy but not by an MD strategy.
    By Lemma~\ref{lem:reduction-cq-dq}.1, we can reduce each $\game_n$ to an SG with a \emph{quantitative reachability DQ} where \Achiever's winning strategies are the same as those in the original $\game$, i.e., they require at least $2^n{ - }1$ memory.
\end{proof}

We now prove a lower bound for safety DQs.
To the best of our knowledge, unlike Lemma~\ref{lem:scomp-SS-R-quant}, the bound for safety DQs does not easily follow from related works.
Therefore we propose a new construction for this case (shown on the right of Figure~\ref{fig:scomp-SS-S-quant}; see \iftoggle{arxiv}{Appendix~\ref{app:proof-scomp-SS-S-quant}}{\cite{arxiv}} for the proof).

\begin{restatable}{lemma}{lemscompSSSquant}
    \label{lem:scomp-SS-S-quant}
    In the general-strategies case, quantitative \emph{safety} DQs need $2^n - 1$ memory.
\end{restatable}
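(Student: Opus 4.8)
The goal is to exhibit, for each $n\geq 1$, a stochastic game $\game_n$ with an $n$-dimensional quantitative safety DQ $\query_n = \bigvee_{i=1}^n \probability(\safe\,\overline{\{\targets^i\}}) \geq x_i$ that is achievable but for which every achieving strategy of \Achiever\ uses at least $2^n-1$ memory states; together with the upper bound of Lemma~\ref{lem:strat-upper-bound} this makes $2^n-1$ the exact memory requirement. Since --- unlike the reachability case (Lemma~\ref{lem:scomp-SS-R-quant}) --- no suitable family seems to follow from the literature, the plan is to construct $\game_n$ directly along the lines of the game on the right of Figure~\ref{fig:scomp-SS-S-quant}: $n$ (non-sink) target states $\targets^1,\dots,\targets^n$ and three stages. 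In stage A the probabilistic environment chooses one of the $2^n-1$ proper subsets $\targets_1\subsetneq\{1,\dots,n\}$ uniformly at random and routes the play through the targets $\targets^i$ with $i\in\targets_1$. In stage B \Achiever\ chooses a subset $\targets_2$; a subsequent probabilistic ``penalty'' gadget then, with a probability $p(\targets_2)$ that grows with $|\targets_2|$, routes the play into a gadget visiting all of $\targets^1,\dots,\targets^n$, and otherwise into stage C. In stage C \Spoiler\ (possibly using randomization) routes the play through $n-|\targets_2|-1$ further targets of his choice.

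The technical heart is the dichotomy: for a suitable choice of the thresholds $\vec x$ and the monotone penalty $p(\cdot)$, \Achiever\ achieves $\query_n$ precisely when she plays $\targets_2=\targets_1$ in stage B. This is shown by computing, for each triple (env choice $\targets_1$, \Achiever's move $\targets_2$, \Spoiler's possibly mixed response), the reachability probabilities $\probability(\reach\targets^i)$, and hence the safety values $1-\probability(\reach\targets^i)$. When $\targets_2=\targets_1$, the targets that \Spoiler\ and the penalty gadget can jointly drive to high reachability probability are at most $|\targets_1|+(n-|\targets_1|-1)=n-1$ in number, so some target $\targets^j$ keeps $\probability(\safe\,\overline{\{\targets^j\}})$ above its threshold; in the complementary cases --- $|\targets_2|<|\targets_1|$, $|\targets_2|>|\targets_1|$, and $|\targets_2|=|\targets_1|\neq\targets_2$ --- one exhibits a (possibly randomized) \Spoiler\ response that simultaneously drags \emph{all} $n$ safety values strictly below their thresholds. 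Hence $\query_n$ is achievable, and any achieving strategy must reproduce $\targets_1$ exactly at stage B.

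Granting the dichotomy, the memory lower bound follows by a pigeonhole argument over (stochastic-update) strategy automata: an achieving automaton with fewer than $2^n-1$ memory states would collapse two stage-A plays ending in distinct proper subsets $\targets_1\neq\targets_1'$ into the same memory state, so \Achiever\ would make the same stage-B move --- the same distribution over subsets, if she randomizes --- on both; by the dichotomy she then fails on a positive-probability set of plays on which \Spoiler\ (reacting to the realised history) makes every objective fail, and an aggregation over the remaining plays shows no single safety objective can be pulled back above its threshold --- contradicting achievability. Together with Lemma~\ref{lem:strat-upper-bound} this yields the claim. I expect the main obstacle to lie in the quantitative calibration inside the dichotomy: pinning down $\vec x$, the monotone penalty $p(\cdot)$, and \Spoiler's optimal (mixed) stage-C behaviour so that correct echoing leaves a protectable objective for \emph{every} proper $\targets_1$ while every wrong echo --- even one of the same cardinality or a strict superset --- lets \Spoiler\ sink all $n$ objectives at once; a secondary subtlety, already present for Lemma~\ref{lem:scomp-SS-R-quant}, is to verify that the lower bound survives randomization, i.e.\ that mixing over stage-B moves cannot substitute for memory.
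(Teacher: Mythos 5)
Your construction is the one depicted on the right of Figure~\ref{fig:scomp-SS-S-quant}, and your overall plan---establish a dichotomy forcing \Achiever\ to echo the environment's choice exactly ($\targets_2 = \targets_1$), then run a pigeonhole argument over the $2^n-1$ proper subsets to lower-bound the memory---is exactly the paper's. The reduction of the memory bound to the dichotomy, and the remark that one must additionally rule out randomization substituting for memory, are the right skeleton.

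The gap is that the dichotomy itself, which is the entire mathematical content of the lemma, is asserted rather than proved, and the one quantitative argument you do give for it does not suffice. Your count for the correct echo---that \Spoiler\ and the penalty gadget can jointly push at most $|\targets_1| + (n-|\targets_1|-1) = n-1$ targets to high reachability---depends only on the \emph{cardinality} of $\targets_2$, not on its identity. Applied verbatim to a wrong echo $\targets_2 \neq \targets_1$ with $|\targets_2| = |\targets_1|$, it gives the same conclusion: the penalty $p(\targets_2)=p(\targets_1)$ is unchanged, \Spoiler\ still selects only $n-|\targets_2|-1$ targets in Stage C, so some target outside $\targets_1 \cup \targets_C$ is again spared with exactly the same safety value as in the correct-echo case. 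Under your stated rules (Stage A visits $\targets_1$, Stage C lets \Spoiler\ pick any $n-|\targets_2|-1$ targets) the dichotomy is therefore false, and the lemma does not follow. Repairing it requires coupling Stage C to the \emph{identity} of $\targets_2$ (e.g., restricting \Spoiler's Stage-C menu relative to $\targets_2$ so that a mismatch lets him ``spare'' only a target already visited in Stage A), and then calibrating $\vec{x}$ and $p(\cdot)$ so that (i) supersets $\targets_2 \supsetneq \targets_1$ are killed by the increased penalty uniformly over all values of $|\targets_1|$ (note a single uniform threshold cannot simultaneously satisfy $x \leq 1 - p(k)$ and $x > 1 - p(k{+}1)$ for every $k$, so this needs care), and (ii) any randomized Stage-B choice putting positive weight on a wrong echo already drags every $\probability(\safe\,\overline{T_i})$ below its threshold in aggregate. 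You flag all of this as ``the main obstacle,'' but it is precisely the step that cannot be taken on faith, so the proposal does not yet constitute a proof.
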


In summary, quantitative DQs require the \emph{full} goal-unfolding---which is of exponential size in $n$---while qualitative ones do not need memory at all.

\subsection{Strategy Complexity under Deterministic Strategies}
\label{sec:strat_comp_det}

\begin{example}
    \label{ex:det_strat_mem} 
    Consider the SG in Figure~\ref{fig:det-strat-mem-example} (left).
    The qualitative DQ $\probability(\reach T_1) \geq 1 \vee \probability(\reach T_2) \geq 1$ is achievable if \Spoiler\ has to choose deterministically in $s_0$:
    If he chooses the upper path, then \Achiever\ chooses $T_1$ in $s_1$; conversely, if he chooses the lower path, then \Achiever\ moves to $T_2$.
    Clearly, this strategy requires \Achiever\ to use one bit of memory (i.e., memory size $2$) and there is no memoryless strategy that achieves the query.
    This example shows that Lemma~\ref{lem:strat-upper-bound} is not valid in the deterministic-strategies case: memory is needed even for sink queries.
\end{example}

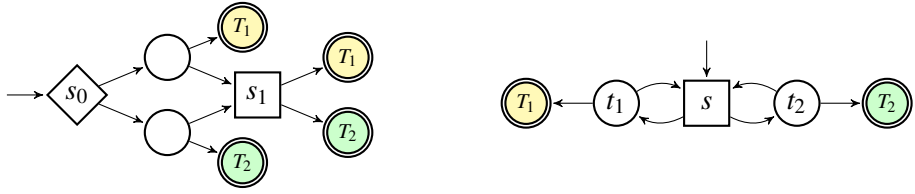
\begin{figure}[t]
    \centering
    \begin{tikzpicture}[node distance = 5mm and 12mm, on grid, initial text=,myArrowStyle]
        \node[min,initial] (s0) {$s_0$};
        \node[prob, above right=of s0] (pa) {};
        \node[prob, below right=of s0] (pb) {};
        \node[prob,target,col1,right=10mm of pa,yshift=4mm] (T1a) {\scriptsize $T_1$};
        \node[prob,target,col2,right=10mm of pb,yshift=-4mm] (T2b) {\scriptsize$T_2$};
        \node[max,below right=of pa] (s1) {$s_1$};
        \node[prob,target,col1,above right=of s1] (T1c) {\scriptsize$T_1$};
        \node[prob,target,col2,below right=of s1] (T2d) {\scriptsize $T_2$};
        
        \draw[trans] (s0) -- node[above] {} (pa);
        \draw[trans] (s0) --node[below] {} (pb);
        \draw[trans] (pa) -- (T1a);
        \draw[trans] (pb) -- (T2b);
        \draw[trans] (pa) -- (s1);
        \draw[trans] (pb) -- (s1);
        \draw[trans] (s1) --node[above] {} (T1c);
        \draw[trans] (s1) -- node[below] {}(T2d);
    \end{tikzpicture}
    \hspace{15mm}
    \begin{tikzpicture}[node distance = 5mm and 12mm, on grid, initial text=,initial where=above,myArrowStyle]
        \node[max,initial] (s) {$s$};
        \node[prob, right=of s] (s1) {$t_2$};
        \node[prob, left=of s] (s0) {$t_1$};
        \node[prob, right=of s1,target,col2] (t1) {\scriptsize$T_2$};
        \node[prob, left=of s0,target,col1] (t0) {\scriptsize$T_1$};
        
        \draw[trans] (s) edge[bend left] (s0);
        \draw[trans] (s0) edge[bend left] (s);
        \draw[trans] (s0) edge (t0);
        \draw[trans] (s) edge[bend right] (s1);
        \draw[trans] (s1) edge[bend right] (s);
        \draw[trans] (s1) edge (t1);
        
        \node[below=10mm of s] () {};
    \end{tikzpicture}
    \caption{
        All probabilities are $0.5$.
        Left:
        If only deterministic strategies are allowed, then \Achiever's achieving strategies for $\probability(\reach T_1) \geq 1 \vee \probability(\reach T_2) \geq 1$ need memory, despite the fact that targets/unsafe sets contain only sinks.
        Right:
        There exists $x$ such that the CQ $\probability(\reach T_1) \geq x \wedge \probability(\reach T_2) \geq 1{-}x$ is achievable under deterministic strategies iff infinite memory is allowed.
    }
    \label{fig:det-strat-mem-example}
\end{figure}

The above example indicates that the deterministic-strategies case is already ``interesting'' for sink queries.
Therefore---due to space limitations of the paper---we have decided to restrict our study of deterministic strategies to sink queries only.
The somewhat counter-intuitive situation that \emph{more} memory is needed if \emph{less general} strategies are allowed is due to the fact that only \Spoiler\ benefits from randomization, but not \Achiever.
In fact, \Achiever\ can in general achieve \emph{more} queries in the deterministic-strategies setting.
Nonetheless, the example in Figure~\ref{fig:det-strat-mem-example} shows that \Achiever's added power does not come for free:
She has to invest more resources (memory) on her side as well.
The following shows that this is necessary in general:

\begin{restatable}{lemma}{lemscompSDSRqual}
    \label{lem:scomp-SD-SR-qual}
    In the deterministic-strategies case, $n \choose n/2$ memory is necessary for qualitative reachability or safety DQs. This holds even for sink queries.
\end{restatable}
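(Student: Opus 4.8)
The plan is to give an explicit family of sink games $(\game_n)_{n\ge1}$, together with a qualitative reachability DQ (the safety case is symmetric, swapping $\reach T_i$ for $\safe T_i$ via complementation), in which \Achiever\ wins under deterministic strategies but needs at least $\binom{n}{n/2}$ memory states. The key idea is borrowed from the structure already visible in Example~\ref{ex:det_strat_mem} and Figure~\ref{fig:det-strat-mem-example} (left): \Spoiler, restricted to deterministic choices, must commit to some subset of "directions", and \Achiever's correct response depends on exactly which subset was chosen. To get the binomial bound, I would have \Spoiler\ first make $n$ successive deterministic binary choices, after each of which the play may (with probability bounded away from $0$) be diverted into a "test gadget" that forces one of the $n$ targets to be hit surely if \Achiever\ has already guessed wrong. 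The crucial accounting: the bits \Spoiler\ fixes encode a subset $J\subseteq\{1,\dots,n\}$, and \Achiever\ can satisfy some disjunct $\probability(\reach T_i)\ge1$ only if she steers toward a target $i$ that is consistent with \emph{all} of \Spoiler's revealed bits so far; a Myhill--Nerode-style argument then shows that two histories leading to distinct subsets of size $n/2$ cannot share a memory state, since the unique winning continuation differs.

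Concretely, the main steps are: (1) define $\game_n$ with a "decision tree" of depth $n$ of \Spoiler-states, where at level $k$ \Spoiler\ picks a bit $b_k$, and from each node a probabilistic state leads with probability $\nicefrac12$ into a subgame where \Achiever\ must already have "aimed" at a target compatible with $b_1,\dots,b_k$; (2) show soundness — \Achiever\ wins by tracking, in $\binom{n}{n/2}$ memory states, the current candidate set of still-viable targets (which is a subset of $\{1,\dots,n\}$ whose size is exactly the number of remaining levels, hence at its largest equals $\binom{n}{\lfloor n/2\rfloor}$ distinct reachable memory contents); (3) show necessity — if \Achiever\ uses a strategy automaton with fewer than $\binom{n}{n/2}$ states, then by pigeonhole two \Spoiler-histories reaching incomparable "viable sets" of size $n/2$ are mapped to the same memory state; feed this memory state the same suffix of \Spoiler-moves and derive that in at least one of the two plays \Achiever\ aims at a target that \Spoiler\ has already excluded, so every disjunct fails with probability~$1$ along some branch, contradicting achievability. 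For the safety version, replace each reachability target $T_i$ by a safety set whose complement is the corresponding sink; the combinatorics are identical because the "middle layer" $\binom{n}{n/2}$ appears exactly as in~\cite[Lem.~1]{FH10}.

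I would make the gadget design mirror the right-hand picture in Figure~\ref{fig:scomp-SS-S-quant} so that the probability of being "tested" at level $k$ is some fixed $p>0$ independent of the history; this guarantees the query is genuinely quantitative-free, i.e., really $\ge1$ on one disjunct, so that the lower bound holds for \emph{qualitative sink} DQs as claimed, and cannot be circumvented by trading memory for slack in the threshold. The main obstacle I anticipate is the necessity direction: it is not enough to say "two histories collide"; I must argue that the \emph{entire future behavior} of \Achiever\ from that shared memory state is then fixed, and that \Spoiler\ can exploit this by completing each of the two colliding histories with the suffix that is fatal for it. This requires a careful choice of which $n/2$-subsets are "incomparable" in the right sense (no one is a superset of the other, so neither completion's fatal target is viable in the other), which is precisely why the antichain of size $\binom{n}{n/2}$, rather than the full $2^n$-unfolding, is the tight figure — exactly the phenomenon already noted for non-stochastic games in the text preceding the statement. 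The soundness direction and the reduction between the safety and reachability formulations I expect to be routine.
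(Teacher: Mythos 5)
Your overall skeleton matches the paper's: \Spoiler\ commits (deterministically) to one of the $\binom{n}{n/2}$ subsets of targets, \Achiever\ must reproduce that exact subset, and a pigeonhole argument on memory states at the moment of her response gives the lower bound. The antichain observation is also correct, though it comes for free: any two distinct $(n/2)$-subsets are automatically incomparable, so no ``careful choice'' is needed. However, there is a genuine gap in the construction itself, namely the mechanism that makes a non-exact response fatal while keeping the query achievable at all. Your one sentence describing the test gadget --- it ``forces one of the $n$ targets to be hit surely if \Achiever\ has already guessed wrong'' --- is backwards for a reachability DQ: surely hitting a target is exactly what \Achiever\ wants. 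Moreover, your interleaved structure places the ``test'' immediately after each of \Spoiler's bits, i.e., \emph{before} \Achiever\ has responded, so there is nothing for the gadget to test; and you never say what happens after \Achiever's response, which is where the punishment must live.

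The paper resolves this with a three-stage game and a counting trick that is the real content of the proof. With $n = 2q$: in Stage~A \Spoiler\ picks a $q$-set $\targets_A$, which is \emph{visited} with probability $\nicefrac{1}{2}$ (so only targets in $\targets_A$ can possibly be reached with probability $1$); in Stage~B \Achiever\ picks a $q$-set $\targets_B$, likewise visited; in Stage~C (reached with total probability $\nicefrac{1}{4}$) \Spoiler\ must pick and visit $q{+}1$ targets $\targets_C$. A target is reached with probability $1$ iff it lies in $\targets_A \cap \targets_B \cap \targets_C$. Since $\targets_C$ has size $q{+}1$ out of $2q$, \Spoiler\ can dodge $\targets_A \cap \targets_B$ exactly when that intersection has size at most $q{-}1$, i.e., whenever $\targets_B \neq \targets_A$; conversely, if $\targets_B = \targets_A$ then every $(q{+}1)$-set meets it, so the query is achievable and \Achiever's only winning response is exact mirroring. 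This simultaneously delivers soundness (the query is winnable) and necessity (the pigeonhole collision you describe then goes through, since two colliding memory states at the start of Stage~B force the same $\targets_B$ against two different $\targets_A$'s). Your proposal contains neither the ``targets are visited on the probabilistic off-ramps'' device nor the $q$-versus-$(q{+}1)$ balance, and without them the construction as described does not yield an achievable query with the claimed memory requirement. The safety case is likewise not a literal complementation but a dual construction, which the paper also defers to the appendix.
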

\begin{proof}[Proof (sketch)]
    We sketch the construction for qualitative reachability DQs. Let $n = 2q$ be even.
    The game comprises three stages, A, B, and C.
    In the initial Stage A, Adam specifies a combination $\targets_A$ of $q$ different targets.
    Those are then visited with probability $\nicefrac 1 2$ and the game ends.
    With the remaining probability of $\nicefrac 1 2$, the game moves on to Stage B which is similar to stage A, but controlled by \Achiever.
    Let $\targets_B$ be the set of $q$ targets that \Achiever\ specifies in this stage.
    Consequently, with total probability $\nicefrac 1 4$, the game enters the final stage C where Adam chooses and visits $q+1$ different targets $\targets_C$.
    It follows that a target is visited with probability $1$ iff it is chosen in all three stages.
    The only achieving strategy of \Achiever\ consists in selecting exactly $\targets_B = \targets_A$, requiring ${n \choose q}$ memory.
    See \iftoggle{arxiv}{Appendix~\ref{app:proof-scomp-SD-SR-qual}}{\cite{arxiv}} for the remaining details and the adaptation of the construction to safety.
\end{proof}

We consider quantitative DQs next.
The following lemma is the quantitative-bounds version of Lemma~\ref{lem:reduction-cq-dq}.
The difference is, however, that the reduction only works under \emph{deterministic} strategies (see~\iftoggle{arxiv}{Appendix~\ref{app:proof-reduction-cq-dq-quant}}{\cite{arxiv}} for the proof).

\begin{restatable}{lemma}{lemreductioncqdqquant}
    \label{lem:reduction-cq-dq-quant}
    For every SG $\game$ with quantitative reachability CQ $\query = \bigwedge_{i=1}^n \probability(\reach T_i) \geq x_i$, there exists a game $\game'$ (where \Achiever's strategies $\maxstrat$ are in one-to-one correspondence) and a quantitative reachability DQ $\query'$ such that, under the assumption of \emph{deterministic} strategies, $\sigma$ achieves $\query$ in $\game $ iff $\sigma$ achieves $\query'$ in $\game'$.
\end{restatable}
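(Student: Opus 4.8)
The plan is to mimic the construction from Lemma~\ref{lem:reduction-cq-dq} but to re-engineer the gadget that \Spoiler\ uses to ``test'' individual objectives so that it works with the arbitrary threshold vector $\vec{x}$ rather than the uniform thresholds of the qualitative case. Concretely, I would build $\game'$ by prefixing $\game$ with a probabilistic branching state: with some probability $q$ the original game $\game$ is played unchanged, and with probability $1-q$ control passes to a fresh \Spoiler-state from which \Spoiler\ picks one index $i \in \{1,\dots,n\}$ and the play immediately jumps to (a sink copy of) $T_i$ and stops. Because \Achiever\ plays only in the $\game$-part (the index-selection state belongs to \Spoiler), her strategies in $\game'$ are in one-to-one correspondence with her strategies in $\game$, as required. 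The DQ $\query'$ will again be $\bigvee_{i=1}^n \probability(\reach T_i) \ge x'_i$ for suitably rescaled thresholds $x'_i$ depending on $q$ and $x_i$.

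The key computation is the following. Fix an \Achiever\ strategy $\sigma$ and let $p_i \eqdef \probability^{\sigma}(\reach T_i)$ be the reachability probability of $T_i$ within the embedded copy of $\game$ (here I use that $\sigma$ does not move in the selection gadget, so these probabilities are well-defined for $\sigma$ alone, independently of \Spoiler's behaviour in the gadget). Since \Spoiler\ is now forced to play \emph{deterministically}, in the gadget he picks a \emph{single} fixed index $j$; then in $\game'$ the reachability probability of $T_i$ is $q\,p_i$ for $i \ne j$ and $q\,p_j + (1-q)$ for $i = j$. I would choose the rescaled thresholds so that the disjunct for index $i$ is met exactly when either ``$p_i \ge x_i$ holds in $\game$'' or ``\Spoiler\ selected $i$ in the gadget and $p_i$ is not too small''. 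The natural choice is $x'_i \eqdef q\,x_i + (1-q)$: then the $i$-th disjunct $\probability^{\sigma,\tau}(\reach T_i) \ge x'_i$ in $\game'$ holds iff either \Spoiler\ picked $i$ and $q p_i + (1-q) \ge q x_i + (1-q)$, i.e.\ $p_i \ge x_i$; or \Spoiler\ picked some $j \ne i$ and $q p_i \ge q x_i + (1-q)$, which for $q$ close enough to $1$ (and $x_i < 1$) is a strictly stronger requirement than $p_i \ge x_i$ and in fact forces $p_i$ very close to $1$. The upshot: if \Spoiler\ plays adversarially he will always pick the index $j$ that \emph{minimises} the slack, and the DQ $\query'$ is achieved against all deterministic \Spoiler\ strategies iff $p_i \ge x_i$ for \emph{every} $i$, i.e.\ iff $\sigma$ achieves the CQ $\query$ in $\game$. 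One direction ($\sigma$ achieves $\query$ $\Rightarrow$ $\sigma$ achieves $\query'$) is immediate since then every disjunct holds regardless of \Spoiler; the other direction uses that, if some $p_i < x_i$, then \Spoiler\ picking that particular $i$ in the gadget falsifies the $i$-th disjunct (as $q p_i + (1-q) < q x_i + (1-q) = x'_i$) while simultaneously driving every other $T_k$ to probability $q p_k \le q < x'_k$, so no disjunct survives.

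The main obstacle I anticipate is pinning down how small $1-q$ must be and making sure the argument is genuinely threshold-uniform: we need $q x_i + (1-q) > q$ to fail, i.e.\ we need $x'_k = q x_k + (1-q) > q \ge q p_k$ for all $k$, which holds as soon as $q < 1$, so in fact \emph{any} $q \in (0,1)$ works and one may simply take $q = \nicefrac{1}{2}$ to match the flavour of Lemma~\ref{lem:reduction-cq-dq}. The reason the reduction breaks under \emph{randomised} \Spoiler—and hence why this lemma is weaker than Lemma~\ref{lem:reduction-cq-dq}—is that a randomised \Spoiler\ in the gadget can spread the selection over several indices, producing reachability probabilities of the form $q p_i + (1-q)\lambda_i$ with $\sum_i \lambda_i = 1$; this convexification can let \Achiever\ satisfy a disjunct even when the underlying CQ fails (exactly the phenomenon illustrated by state $s_2$ of Figure~\ref{fig:introExample}), so the equivalence with the CQ is lost. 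I would remark on this explicitly to explain the hypothesis. The remaining details—formally defining $\game'$ as an SG tuple, checking the one-to-one correspondence of \Achiever\ strategies, and the two-line verification of each implication above—are routine and would be deferred to the appendix / full version.
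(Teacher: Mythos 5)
Your construction is essentially the paper's: the same gadget as in Lemma~\ref{lem:reduction-cq-dq} (play $\game$ with probability $\nicefrac{1}{2}$, otherwise let \Spoiler\ deterministically select and visit one target), with the thresholds rescaled to $x_i' = \tfrac{1}{2}x_i + \tfrac{1}{2}$, and your verification of both directions for a deterministic \Spoiler\ is correct. Two minor slips worth fixing: the claim that \emph{any} $q\in(0,1)$ works is false --- you need $q p_k \le q < q x_k + (1-q)$, i.e.\ $q < \tfrac{1}{2-x_k}$ for every $k$, which holds for $q=\nicefrac{1}{2}$ because $x_k>0$ but can fail for $q$ close to $1$ when some $x_k$ is small. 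Also, your closing remark about why randomization breaks the reduction has the direction reversed: a randomizing \Spoiler\ who spreads the gadget mass over several indices yields $q p_i + (1-q)\lambda_i < q x_i + (1-q) = x_i'$ for all $i$ whenever the CQ holds only with equality, so it is the implication ``$\sigma$ achieves $\query$ $\Rightarrow$ $\sigma$ achieves $\query'$'' that fails, not the converse.
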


We can use the previous reduction to show that in general, infinite memory is necessary for achieving quantitative DQs under deterministic strategies.

\begin{lemma}
    \label{lem:mdp-inf-mem}
    For the MDP from Figure~\ref{fig:det-strat-mem-example} (right), there exists $x \in [0,1]$ such that the CQ $\probability(\reach T_1) \geq x \wedge \probability(\reach T_2) \geq 1{-}x$ is only achievable by an infinite-memory strategy.
\end{lemma}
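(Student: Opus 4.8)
The plan is to analyze the MDP $\mathcal M$ of Figure~\ref{fig:det-strat-mem-example} (right) directly; write $s$ for its single \Achiever-state, $t_1,t_2$ for the probabilistic states and $T_1,T_2$ for the sinks. The first step is to observe that, because $t_1$ (resp.\ $t_2$) has exactly the successors $s$ and $T_1$ (resp.\ $s$ and $T_2$), each with probability $\nicefrac12$, and because $T_1,T_2$ are absorbing, every non-absorbed prefix of a play has the form $s\,t_{i_1}\,s\,t_{i_2}\,s\cdots s$. Hence the history just before the $j$-th visit to $s$ is determined by \Achiever's earlier choices, so a \emph{deterministic} strategy $\maxstrat$ is precisely a sequence $(i_j)_{j\ge1}\in\{1,2\}^{\Naturals}$, where $i_j$ is the action taken at the $j$-th visit to $s$.

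The second step computes the induced reachability probabilities. The $j$-th visit to $s$ occurs with probability $2^{-(j-1)}$, after which the play is absorbed in $T_{i_j}$ with conditional probability $\nicefrac12$; therefore
\[
  \probability^{\maxstrat}(\reach T_k) \;=\; \sum_{j \,:\, i_j = k} 2^{-j} \quad (k = 1,2),
\]
and in particular $\probability^{\maxstrat}(\reach T_1) + \probability^{\maxstrat}(\reach T_2) = \sum_{j\ge1} 2^{-j} = 1$. Since $\probability(\reach T_2)\ge 1-x$ is thus equivalent to $\probability(\reach T_1)\le x$, the CQ $\probability(\reach T_1)\ge x \wedge \probability(\reach T_2)\ge 1-x$ is achieved by $\maxstrat$ if and only if $\probability^{\maxstrat}(\reach T_1) = x$ \emph{exactly}, i.e.\ iff $x = \sum_{j\in J} 2^{-j}$ where $J = \{j : i_j = 1\}$.

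Now I would fix an \emph{irrational} $x\in(0,1)$ (say $x = \nicefrac{1}{\sqrt{2}}$) and conclude. Letting $i_j = 1$ whenever the $j$-th bit of the binary expansion of $x$ is $1$, and $i_j = 2$ otherwise, yields a (necessarily infinite-memory) strategy with $\probability^{\maxstrat}(\reach T_1) = x$, so the CQ \emph{is} achievable. For the lower bound, a finite-memory deterministic strategy induces an \emph{eventually periodic} sequence $(i_j)$: along the deterministic backbone $s\,t_{i_1}\,s\,t_{i_2}\cdots$, the memory state reached just before the $j$-th visit to $s$ evolves by a fixed function on the finite memory set, hence is eventually periodic, and $i_j$ is a function of it. An eventually periodic $J$ makes $\sum_{j\in J}2^{-j}$ rational, so $\probability^{\maxstrat}(\reach T_1)$ is rational, hence different from $x$; thus no finite-memory deterministic strategy achieves the CQ.

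The only slightly delicate points are (i) the correspondence ``finite-memory deterministic strategy $\leftrightarrow$ eventually periodic action sequence'', which is a pigeonhole argument on the finitely many memory states encountered along the deterministic backbone, and which I expect to be the main (though routine) technical obstacle; and (ii) sidestepping the non-uniqueness of binary expansions of dyadic rationals---handled precisely by choosing $x$ irrational, so that its expansion is unique and not eventually periodic.
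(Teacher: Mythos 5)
Your proposal is correct and follows essentially the same route as the paper's proof: both identify deterministic strategies with infinite binary sequences, observe that $\probability^{\maxstrat}(\reach T_1)$ equals the real number with that binary expansion (so achieving the CQ forces $\probability^{\maxstrat}(\reach T_1)=x$ exactly), and then derive a contradiction from finiteness of the memory. The only cosmetic difference is the final step: the paper fixes the specific aperiodic expansion $x=(0.1^101^201^30\ldots)_2$ and pumps two prefixes reaching the same memory state, whereas you take $x$ irrational and invoke the equivalent fact that a finite-memory deterministic strategy yields an eventually periodic, hence rational, value.
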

\begin{proof}
    Every deterministic strategy $\maxstrat$ can be identified with an infinite string $\maxstrat = \maxstrat_1 \maxstrat_2 \ldots \in \{0,1\}^\omega$ where for all $i \geq 1$, $\maxstrat_i = 1$ ($\maxstrat_i = 0$) indicates that $\maxstrat$ moves to $t_1$ ($t_2$, resp.) when $s$ is entered for the $i$-th time.
    Clearly, $\probability^{\maxstrat}(\reach T_1) = (0.\maxstrat)_2$, i.e., the probability to reach $T_1$ is equal to the real number whose decimal binary representation is the infinite string $0.\maxstrat$.
    %
    %
    We claim that the above CQ $\probability(\reach T_1) \geq x \wedge \probability(\reach T_2) \geq 1{-}x$ with $x = (0.1^101^201^301^40...)_2$ can only be achieved by a strategy that uses infinite memory.
    If not, then let $\mathcal{M}$ be a finite-state strategy automaton that achieves the CQ.
    Since $\mathcal{M}$ is finite, there exist two distinct prefixes $\path_1 \neq \path_2$ of $1^101^201^301^40...$ such that after reading $\path_1$ or $\path_2$, the automaton $\mathcal{M}$ is in the same memory state $m$.
    Suppose that from $m$ on, $\mathcal{M}$ plays action sequence $\pi \in \{0,1\}^\omega$.
    Since $\mathcal{M}$ is achieving, we must have that $\pi_1 \pi = \pi_2 \pi = 1^101^201^301^40...$ which, however, implies $\pi_1 = \pi_2$, contradiction.
\end{proof}

\begin{corollary}
    \label{cor:scomp:SD-SR-quant}
    In general, infinite memory is necessary for achieving quantitative reachability or safety DQs under deterministic strategies. This holds even for sink queries.
\end{corollary}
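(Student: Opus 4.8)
The plan is to obtain the corollary by chaining the two preceding results. Lemma~\ref{lem:mdp-inf-mem} exhibits an MDP on which a quantitative reachability \emph{conjunctive} query is achievable only with infinite memory, and Lemma~\ref{lem:reduction-cq-dq-quant} rewrites, under deterministic strategies, any quantitative reachability CQ as an equivalent quantitative reachability \emph{disjunctive} query on a slightly larger game while keeping \Achiever's strategies in one-to-one correspondence. Composing the two produces a quantitative reachability DQ that still needs infinite memory, and a short dualization argument then covers the safety case as well.

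Concretely, I would first apply Lemma~\ref{lem:reduction-cq-dq-quant} to the MDP $\game$ and the CQ $\query = \probability(\reach T_1) \geq x \wedge \probability(\reach T_2) \geq 1{-}x$ from Lemma~\ref{lem:mdp-inf-mem}, with $x$ the threshold constructed there. This yields a game $\game'$ and a quantitative reachability DQ $\query'$ such that, restricted to deterministic strategies, $\sigma$ achieves $\query$ in $\game$ iff $\sigma$ achieves $\query'$ in $\game'$, the correspondence $\sigma \leftrightarrow \sigma$ being a memory-preserving bijection between \Achiever's strategy sets (the gadget added by the reduction gives \Achiever\ no new genuine choices, so a strategy on $\game'$ restricts to a strategy on the embedded copy of $\game$ using no more memory, and conversely). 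Hence $\query'$ is achievable in $\game'$ but, by Lemma~\ref{lem:mdp-inf-mem}, not by any finite-memory deterministic strategy. Since the CQ of Lemma~\ref{lem:mdp-inf-mem} is a sink CQ and the construction behind Lemma~\ref{lem:reduction-cq-dq-quant} (the quantitative variant of the one on the left of Figure~\ref{fig:scomp-SS-S-quant}) only appends states after which the game ends immediately, reusing the original targets, $\query'$ is again a sink DQ. This settles the reachability part.

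For the safety part I would reduce to reachability. Note that $\game'$ is \emph{stopping} and its only sinks are exactly the pairwise disjoint targets $T_1,\ldots,T_m$ occurring in $\query'$; hence along every play $\sum_j \probability(\reach T_j) = 1$, so $\probability(\reach T_i) = 1 - \probability(\reach \bigcup_{j \neq i} T_j) = \probability(\safe\, \overline{\bigcup_{j\neq i}T_j})$ for every $i$. Therefore the sink reachability DQ $\bigvee_{i} \probability(\reach T_i) \geq x_i$ and the sink safety DQ $\bigvee_{i} \probability(\safe\, \overline{\bigcup_{j\neq i}T_j}) \geq x_i$ have the same set of achieving strategies in $\game'$ (and each unsafe set $\bigcup_{j\neq i}T_j$ consists only of sinks, so the latter is indeed a sink safety DQ), so the infinite-memory lower bound transfers verbatim. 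An alternative, more direct route is to adapt the construction to safety exactly as in the proof of Lemma~\ref{lem:scomp-SD-SR-qual}.

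The only real obstacle is bookkeeping: one must check that the game delivered by Lemma~\ref{lem:reduction-cq-dq-quant} is stopping with the claimed sink structure (or can be brought into this form without affecting memory) so that the reachability-to-safety passage applies, and that the strategy correspondence of Lemma~\ref{lem:reduction-cq-dq-quant} genuinely preserves the memory \emph{lower} bound and not merely an upper bound. Both points follow immediately from the explicit shape of the constructions, but they should be stated rather than left implicit.
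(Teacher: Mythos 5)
Your proposal is correct and matches the paper's own proof: apply the reduction of Lemma~\ref{lem:reduction-cq-dq-quant} to the infinite-memory MDP instance of Lemma~\ref{lem:mdp-inf-mem}, then obtain the safety case by complementation in a stopping game. The only cosmetic difference is that the paper performs the reachability-to-safety conversion already in the original MDP (via $\probability(\safe \overline{T}_1) = \probability(\reach T_2)$) rather than in the reduced game $\game'$, and your version is if anything more explicit about the bookkeeping (memory preservation of the strategy correspondence, sink structure of $\game'$).
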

\begin{proof}
    Apply the reduction from Lemma~\ref{lem:reduction-cq-dq-quant} to the MDP from Lemma~\ref{lem:mdp-inf-mem}.
    For safety notice that in the MDP, $\probability(\safe \overline{T}_1) \geq x$ iff $\probability(\reach T_2) \geq x$.
\end{proof}

\section{Computational Complexity}
\label{sec:ccomp}
In this section, we study the complexity of the achievability problem for DQs in the standard semantics, i.e., the decision problem ``$\exists  \maxstrat \forall \minstrat \colon \bigvee_{i=1}^n \probability^{\maxstrat,\minstrat}(\genobj_i) \geq x_i$'' in some given game.
We consider the same variations of the problem as in Section~\ref{sec:strat_comp}, that is, qualitative vs.\ quantitative DQs, reachability vs.\ safety queries and deterministic vs.\ general strategies.
For the complexity theoretic results, we assume that all transition probabilities in the games and thresholds in the queries are rational numbers given as binary-encoded integer pairs.

We again briefly discuss the case of purely deterministic games on graphs.
\cite[Theorem~1]{FH10} shows that, in deterministic games, qualitative safety DQs are $\PSPACE$-complete via a reduction from quantified Boolean formulas.
Reachability DQs, as mentioned in Section \ref{sec:strat_comp_gen}, can be reduced to solving a standard single-target reachability game which can be solved in $\P$.
We now present our results in detail, first for general (Section~\ref{sec:ccomp-gen-strat}) and then for deterministic strategies (Section~\ref{sec:ccomp-det-strat}). Table~\ref{tab:ccomp} summarizes the results.

\begin{theorem}
    The complexity bounds in Table~\ref{tab:ccomp} are correct.
\end{theorem}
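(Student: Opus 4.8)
I would prove Table~\ref{tab:ccomp} entry by entry, grouping the arguments by technique and reusing the strategy-complexity results of Section~\ref{sec:strat_comp} together with the CQ/DQ reductions of Lemmas~\ref{lem:reduction-cq-dq} and~\ref{lem:reduction-cq-dq-quant}. For the general-strategies upper bounds, two observations suffice. First, for \emph{qualitative} DQs, Lemma~\ref{lem:scomp-SS-M-qual} shows that a strategy achieves $\bigvee_{i=1}^n\probability(\genobj_i T_i)\geq 1$ exactly when it achieves a single $\probability(\genobj_i T_i)\geq 1$; hence the query is achievable iff for some $i$ the single-objective almost-sure reachability (resp.\ safety) problem is solvable in $\game$, and since each of those is decidable in polynomial time, the whole problem lies in $\P$. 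Second, for \emph{quantitative} DQs, Lemma~\ref{lem:strat-upper-bound} says it suffices for \Achiever\ to play a deterministic strategy realised by an automaton $\mathcal{M}$ with $2^n-1$ memory states; a decision procedure guesses such an $\mathcal{M}$ (size exponential in $n$, polynomial in $|\game|$), forms the induced MDP $\game^{\mathcal{M}}$ --- which is controlled by \Spoiler\ alone --- and checks that no \Spoiler\ strategy $\tau$ achieves $\probability^{\tau}(\genobj_i T_i) < x_i$ for \emph{all} $i$ simultaneously, a single-player multi-objective feasibility test decidable in polynomial time by linear programming over the Pareto polytope of $\game^{\mathcal{M}}$. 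This yields an $\NEXP$ upper bound and, in contrast to conjunctive queries, already shows that exact achievability is \emph{decidable} under general strategies.

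\textbf{Lower bounds.} For the deterministic-strategies qualitative cells I would give a polynomial-time reduction from truth of quantified Boolean formulas, refining~\cite[Theorem~1]{FH10}: \Spoiler's choices simulate universal quantifiers, \Achiever's the existential ones, the propositional matrix is encoded through targets/unsafe sets, and a single coin flip turns ``a clause evaluates to true'' into ``reached (stayed safe) with probability $1$'', giving $\PSPACE$-hardness for safety DQs; for reachability DQs one additionally composes with the CQ$\to$DQ reduction of Lemma~\ref{lem:reduction-cq-dq}.2. For the deterministic-strategies quantitative cells, undecidability follows by feeding the undecidability of exact quantitative conjunctive reachability under deterministic strategies (\cite{cfk13}, witnessed by the MDP of Lemma~\ref{lem:mdp-inf-mem}) through the reduction of Lemma~\ref{lem:reduction-cq-dq-quant}; the safety variant is then obtained as in Corollary~\ref{cor:scomp:SD-SR-quant}, using $\probability(\safe\overline{T}_1)\geq x \Leftrightarrow \probability(\reach T_2)\geq x$ in that gadget. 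Any remaining quantitative general-strategies entry I would address by a direct reduction from a standard hard source --- e.g.\ the value problem for simple stochastic games, or an encoding of exact threshold arithmetic --- calibrated to the claimed bound.

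\textbf{Main obstacle.} I expect the quantitative general-strategies case to be the delicate one. Efficient \emph{verification} of a guessed $2^n-1$-memory strategy must correctly handle \Spoiler's \emph{strict} inequalities and possibly randomized counter-strategies in $\game^{\mathcal{M}}$ (a plain Pareto-polytope membership test has to be sharpened into a strict-separation test, with attention to whether a threshold is attained or only approached). More importantly, obtaining a lower bound that genuinely matches the exponential blow-up in $n$ --- rather than merely inheriting single-SG hardness --- is the step I anticipate needing a fresh game construction tailored to disjunctions. Once those are in place, the remaining cells of Table~\ref{tab:ccomp} follow by routine recombination of the ingredients above.
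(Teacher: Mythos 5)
Your outline covers several cells correctly and in essentially the paper's way: the $\P$ bound for qualitative DQs via Lemma~\ref{lem:scomp-SS-M-qual}, the $\NEXP$ bound by guessing a $(2^n{-}1)$-memory strategy and verifying the induced MDP with the multi-objective algorithm of~\cite{EKVY08}, and undecidability in the quantitative deterministic case via~\cite{cfk13} composed with Lemma~\ref{lem:reduction-cq-dq-quant}. However, there are two genuine gaps. First, you give no argument at all for the $\EXPTIME$ upper bound on qualitative safety sink DQs under deterministic strategies (Lemma~\ref{lem:comp-SD-S-qual}); this is the one cell that needs a new idea, namely that non-achievability is witnessed by \Spoiler\ reaching every unsafe set with \emph{positive} probability within $|S|$ steps, which lets one simulate the bounded game on an alternating polynomial-space Turing machine (handling probabilistic branching by backtracking). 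Second, your proposed sources for the quantitative general-strategies $\PSPACE$-hardness do not work: the value problem for simple stochastic games lies in $\NP \cap \coNP$ and so cannot yield $\PSPACE$-hardness, and ``exact threshold arithmetic'' is not a reduction. The paper instead starts from the $\PSPACE$-hardness of qualitative reachability CQs in MDPs~\cite[Lem.~2]{RRS17} and pushes it through Lemma~\ref{lem:reduction-cq-dq}.1 for reachability DQs; the safety case needs an extra step, reducing \emph{strict-bounded} quantitative reachability CQs in MDPs (whose $\PSPACE$-hardness requires re-analyzing the construction of~\cite{RRS17}) to safety DQs with non-strict bounds.

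A smaller mismatch: for qualitative reachability DQs under deterministic strategies you propose composing a QBF reduction with Lemma~\ref{lem:reduction-cq-dq}.2, but the table's deterministic-strategies column is explicitly restricted to \emph{sink} queries, and that composition does not produce one (the CQ targets in a QBF/generalized-reachability encoding are not sinks). The paper avoids this with a direct QBF construction (Figure~\ref{fig:qbf_example}) in which probabilistic branching records clause satisfaction in genuine sink targets while the play continues.
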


\begin{table}[t]
    \caption{
        Complexity of the feasibility problem for DQs in the standard semantics.
        Column \emph{``SG with deterministic strats.''} applies exclusively to sink queries.
    }
    \label{tab:ccomp}
    \centering
    {
        \setlength\tabcolsep{4pt}
        \def\arraystretch{1.1}
    \begin{tabular}{l  c  c  c  c   c c}
        \toprule
         
          & \multicolumn{2}{c}{\emph{SG with general strats.}} & \multicolumn{2}{c}{\emph{SG with deterministic strats.}} & \multicolumn{2}{c}{\emph{Non-SG with deterministic strats.}} \\
        
         & \multicolumn{2}{c}{$\safe$ /  $\reach$} & \multicolumn{2}{c}{$\safe$ /  $\reach$} & $\safe$ & $\reach$\\
        
        \midrule
        
        \emph{Qual.}   & \multicolumn{2}{c}{$\P$~[Lem.~\ref{lem:comp-SS-qual}]} & \multicolumn{2}{c}{$\geq \PSPACE$~[Lem.~\ref{lem:comp-SD-SR-qual}] } & $\PSPACE$~\cite{FH10} & $\P$~[trivial] \\
        & & & \multicolumn{2}{c}{$\leq \EXPTIME$~[Lem.~\ref{lem:comp-SD-S-qual}] / ?} & &  \\
        
        \midrule
        
        \emph{Quant.}  & \multicolumn{2}{c}{$\geq \PSPACE$~[Lem.~\ref{lem:comp-SS-S-quant},\ref{lem:comp-SS-R-quant}]} & \multicolumn{2}{c}{undecidable~[Lem.~\ref{lem:comp-SD-R-quant}]} & \multicolumn{2}{c}{---\emph{not applicable}---} \\
        & \multicolumn{2}{c}{$\leq \NEXP$} & \multicolumn{2}{c}{} \\
 
        \bottomrule
    \end{tabular}
    } 
\end{table}

\subsection{Computational Complexity under General Strategies}
\label{sec:ccomp-gen-strat}

\begin{lemma}
    \label{lem:comp-SS-qual}
    In the general-strategies case, qualitative mixed DQs are decidable in $\P$.
\end{lemma}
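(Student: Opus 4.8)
The plan is to reduce achievability of a qualitative mixed DQ to $n$ independent single-objective qualitative problems, each solvable in polynomial time, and to output the disjunction of the answers.

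First I would invoke the equivalence established in the proof of Lemma~\ref{lem:scomp-SS-M-qual}: for a \emph{fixed} strategy $\maxstrat$ of \Achiever, $\maxstrat$ achieves $\query = \bigvee_{i=1}^n \probability(\genobj_i T_i) \geq 1$ against all counter-strategies if and only if there is an index $i$ such that $\maxstrat$ already achieves the single objective $\probability(\genobj_i T_i) \geq 1$ against all counter-strategies. Commuting the existential quantifier over $\maxstrat$ with the existential quantifier over $i$ then yields
\[
    \query \text{ achievable in } \game
    \iff
    \exists\, i \in \{1,\dots,n\}\colon\ \probability(\genobj_i T_i) \geq 1 \text{ achievable in } \game .
\]
Hence it suffices to decide, for each $i$ separately, whether the single qualitative objective $\probability(\genobj_i T_i) \geq 1$ is achievable, and to answer ``yes'' iff at least one of these $n$ checks succeeds; as $n$ is bounded by the input size, this composed procedure is in $\P$ provided each individual check is.

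It then remains to argue that each single-objective check is in $\P$. For a reachability disjunct ($\genobj_i = \reach$), this is the classical \emph{almost-sure reachability} problem on turn-based SGs: the set of states from which \Achiever\ can force reaching $T_i$ with probability $1$ is computable by a standard nested (attractor-style) fixpoint in polynomial time, and the query is achievable iff $\sinit$ lies in that set; MD strategies suffice there (cf.~\cite{Condon90}). For a safety disjunct ($\genobj_i = \safe$), I would first observe that along any fixed pair of strategies every finite path reaching $\overline{T_i}$ carries strictly positive probability, so $\probability^{\maxstrat,\minstrat}(\safe T_i) = 1$ holds iff $\overline{T_i}$ is \emph{surely} unreachable under $(\maxstrat,\minstrat)$. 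Consequently \Achiever\ achieves $\probability(\safe T_i) \geq 1$ iff she wins the non-stochastic two-player safety game on the same graph in which \Achiever\ controls $\Smax$, the opponent controls $\Smin \cup \Sprob$, and the unsafe set is $\overline{T_i}$ --- solvable in linear time by computing the opponent's attractor to $\overline{T_i}$.

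I do not anticipate a real obstacle, since the decisive step (the reduction to single objectives) is handed to us by Lemma~\ref{lem:scomp-SS-M-qual}; the only point deserving care is that this reduction genuinely relies on \Spoiler\ being allowed to randomize --- as witnessed already by the game of Figure~\ref{fig:introExample} started in $s_2$, where neither single objective is achievable but the DQ becomes so once thresholds drop to $\nicefrac12$. Thus the argument is specific to the general-strategies case and does not transfer to the deterministic-strategies column of Table~\ref{tab:ccomp}.
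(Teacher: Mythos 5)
Your proposal is correct and follows essentially the same route as the paper: both reduce the qualitative DQ via the equivalence behind Lemma~\ref{lem:scomp-SS-M-qual} to $n$ independent single-objective almost-sure checks and conclude membership in $\P$. The only (presentational) difference is that the paper discharges the single-objective checks by citing a polynomial-time algorithm for qualitative simple stochastic games, whereas you spell out the attractor computation for almost-sure reachability and the reduction of almost-sure safety to sure safety in the associated non-stochastic game; both justifications are standard and valid.
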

\begin{proof}
    As in Lemma~\ref{lem:scomp-SS-M-qual}, in the qualitative case it suffices to check for each objective $\genobj_i T_i$, $i = 1,\ldots,n$, $\genobj_i \in \{\reach, \safe\}$, individually whether \Achiever\ can satisfy it with probability $1$.
    Hence $n$ queries to a polynomial time algorithm for qualitative simple stochastic games~\cite{DBLP:conf/stacs/EtessamiY06} suffice.
\end{proof}

\begin{lemma}
    \label{lem:comp-SS-R-quant}
    In the general-strategies case, quantitative \emph{reachability} DQs are $\PSPACE$-hard.
\end{lemma}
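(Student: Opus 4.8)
The plan is to reduce from a known $\PSPACE$-hard problem that already produces a quantitative reachability disjunction. The most natural candidate is the $\PSPACE$-hardness of qualitative \emph{conjunctive} reachability queries in stochastic games, which in turn is typically established via a reduction from quantified Boolean formula satisfiability (QBF) or from the emptiness/universality of certain succinct automata. Indeed, $\cite{cfk13}$ already shows $\PSPACE$-hardness for qualitative conjunctive reachability in SGs (via QBF). Given such a CQ instance, Lemma~\ref{lem:reduction-cq-dq}.1 provides exactly what we need: it transforms, in polynomial time, an SG $\game$ with a qualitative reachability CQ $\bigwedge_{i=1}^n \probability(\reach T_i) \geq 1$ into an SG $\game'$ with a \emph{quantitative} reachability DQ $\bigvee_{i=1}^n \probability(\reach T_i) \geq \tfrac12 + \tfrac{1}{2n}$, such that the CQ is achievable in $\game$ iff the DQ is achievable in $\game'$ under general strategies. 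Composing these two reductions immediately yields $\PSPACE$-hardness of quantitative reachability DQs.

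The steps, in order, are: (1) recall (or cite) that deciding achievability of a qualitative conjunctive reachability query $\bigwedge_{i=1}^n \probability(\reach T_i) \geq 1$ in a stochastic game is $\PSPACE$-hard --- this is the content of the QBF reduction in $\cite{cfk13}$; (2) apply the polynomial-time construction of $\game'$ from Lemma~\ref{lem:reduction-cq-dq} to the hard CQ instance, obtaining a game polynomially larger than $\game$ together with the quantitative reachability DQ $\query' = \bigvee_{i=1}^n \probability(\reach T_i) \geq \tfrac12 + \tfrac{1}{2n}$; (3) invoke the correctness part of Lemma~\ref{lem:reduction-cq-dq}.1, namely that $\maxstrat$ achieves the CQ in $\game$ iff $\maxstrat$ achieves $\query'$ in $\game'$, to conclude that the map $\game \mapsto (\game', \query')$ is a valid polynomial-time many-one reduction; (4) conclude $\PSPACE$-hardness of the quantitative reachability DQ achievability problem. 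I would also note in passing that the threshold $\tfrac12+\tfrac{1}{2n}$ is a rational with polynomially many bits, so the produced instance is a legitimate input for the (binary-encoded) decision problem.

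The main obstacle --- and the point that needs the most care --- is sourcing the $\PSPACE$-hardness of the qualitative conjunctive query in the precise form Lemma~\ref{lem:reduction-cq-dq} consumes (a \emph{qualitative} \emph{reachability} CQ with lower bounds $\geq 1$ on each of $n$ targets). If the literature statement is phrased for Boolean combinations or for reward objectives rather than pure reachability, a small amount of massaging is needed to cast it as a plain reachability CQ; this is routine (targets are sinks, thresholds are $1$) but should be stated explicitly. An alternative, self-contained route that avoids relying on the exact phrasing in $\cite{cfk13}$ is to give the QBF reduction directly into a qualitative reachability CQ over a stochastic game --- Adam's states encoding the universally quantified variables, Eve's states the existential ones, a uniform probabilistic "audit" gadget that picks a clause and forces all $n$ clause-literals to be satisfied with probability $1$ --- and then compose with Lemma~\ref{lem:reduction-cq-dq}.1 as above; but modulo citing $\cite{cfk13}$ this extra work is unnecessary.
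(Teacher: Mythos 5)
Your proposal is correct and follows essentially the same route as the paper: establish $\PSPACE$-hardness of achievability for qualitative reachability CQs and compose with the polynomial-time reduction of Lemma~\ref{lem:reduction-cq-dq}.1. The one discrepancy is the source of the base hardness result: the paper cites \cite[Lem.~2]{RRS17}, which gives $\PSPACE$-hardness already for MDPs (i.e., $\Smin = \emptyset$), rather than \cite{cfk13}; your hedging on this point was warranted, but the self-contained QBF reduction you sketch as a fallback is not needed.
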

\begin{proof}
    The feasibility problem for qualitative reachability CQs is $\PSPACE$-hard, which holds already for MDPs where $\Smin = \emptyset$~\cite[Lem. 2]{RRS17}.
    Lemma~\ref{lem:reduction-cq-dq}.1 in Section~\ref{sec:strat_comp_gen}, reduces the MDP CQ problem to a quantitative DQ problem in an SG in polynomial time.
\end{proof}

\begin{restatable}{lemma}{lemcompSSSquant}
    \label{lem:comp-SS-S-quant}
    In the general-strategies case, quantitative \emph{safety} DQs are $\PSPACE$-hard.
\end{restatable}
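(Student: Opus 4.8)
The plan is to derive $\PSPACE$-hardness from the already established $\PSPACE$-hardness of quantitative \emph{reachability} DQs (Lemma~\ref{lem:comp-SS-R-quant}), by turning reachability objectives into safety objectives via the identity $\probability(\safe U) = 1 - \probability(\reach \overline U)$. This identity, however, only lets one translate a reachability \emph{upper} bound into a safety \emph{lower} bound, so a naive dualisation fails; the trick is to first massage the reachability instance into a shape where reachability of $T_i$ and safety of a suitable set $U_i$ coincide exactly.

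Concretely, I would argue that the hardness of quantitative reachability DQs already holds for games $\game$ that are \emph{stopping} with respect to $\bigcup_{i=1}^n T_i$ (under every strategy pair, $\bigcup_i T_i$ is reached with probability one) and in which $T_1,\dots,T_n$ are \emph{pairwise disjoint sinks}; this should follow from a closer look at the reductions behind Lemma~\ref{lem:comp-SS-R-quant}, namely the bounded-depth QBF encoding of~\cite[Lem.~2]{RRS17} together with the construction of Lemma~\ref{lem:reduction-cq-dq}.1, after a routine, achievability-preserving reshaping of their evaluation phases. For such an instance I output the \emph{same} game equipped with the safety DQ $\bigvee_{i=1}^n \probability(\safe U_i) \geq x_i$, where $U_i \eqdef S \setminus \bigcup_{j \neq i} T_j$. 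Correctness rests on the identity $\probability^{\maxstrat,\minstrat}(\safe U_i) = \probability^{\maxstrat,\minstrat}(\reach T_i)$, which holds for every strategy pair and every $i$: since the game is stopping and the $T_j$ are disjoint sinks, a play almost surely enters and then stays in exactly one $T_k$, so ``remaining in $U_i$ forever'' and ``reaching $T_i$'' are the same event up to a null set. Hence \Achiever's achieving strategies for the reachability DQ and for the safety DQ literally coincide, and as the transformation is polynomial, $\PSPACE$-hardness transfers.

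I expect the main obstacle to be exactly the first step: establishing that the reachability-DQ hardness survives the restriction to stopping games with pairwise disjoint sink targets, which requires opening up the underlying constructions rather than invoking Lemma~\ref{lem:comp-SS-R-quant} as a black box. If this turns out to be cumbersome, a fallback is a self-contained reduction from TQBF directly into a staged safety-DQ game, in the spirit of the construction used for Lemma~\ref{lem:scomp-SS-S-quant} (Figure~\ref{fig:scomp-SS-S-quant}, right): \Spoiler\ encodes the universally quantified variables and \Achiever\ the existential ones, the target set $T_k$ collects all assignments satisfying the $k$-th clause, and a probability gadget whose weight depends on \Achiever's moves is inserted to ensure that \Spoiler\ gains nothing by randomising over several clauses --- with the exact calibration of that gadget being the delicate point.
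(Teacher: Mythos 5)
There is a genuine gap in your main route, and it is fatal. Your plan hinges on the claim that the $\PSPACE$-hardness of quantitative reachability DQs survives the restriction to stopping games whose targets $T_1,\dots,T_n$ are pairwise disjoint sinks. But any such instance is a \emph{sink query}, and the paper itself notes (Section~\ref{sec:ccomp-gen-strat}, citing \cite[Corollary~1]{cfk13}) that quantitative DQs over sink queries are $\NP$-complete in the general-strategies case: MD strategies suffice and can be verified by linear programming. So the restricted reachability-DQ problem you want to start from cannot be $\PSPACE$-hard unless $\NP = \PSPACE$. This is not an accident of the particular known reductions: the hardness behind Lemma~\ref{lem:comp-SS-R-quant} comes precisely from targets that are \emph{not} sinks, i.e., from \Achiever\ having to remember (in the exponentially large goal-unfolding) which targets have already been visited along a play; in the underlying qualitative reachability CQ of \cite{RRS17} several targets must be visited on the same play, which is impossible if they are disjoint sinks. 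Making the targets disjoint sinks therefore destroys exactly the feature that causes hardness. (Your identity $\probability(\safe U_i) = \probability(\reach T_i)$ is correct under your hypotheses, but the resulting safety DQ is again a sink query and hence also in $\NP$, so the whole pipeline can at best yield $\NP$-hardness.) Your fallback, a direct QBF encoding into a staged safety game, is a plausible direction but is left unspecified at the one point that matters (the calibration of the probability gadget that prevents \Spoiler\ from profiting by randomisation), so it does not constitute a proof.

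For comparison, the paper takes a different route: it first shows that reachability \emph{CQs with strict bounds} in MDPs are $\PSPACE$-hard (by analysing the construction of \cite[Lem.~2]{RRS17} in more detail), and then reduces these to quantitative safety DQs with non-strict bounds in SGs. The strict/non-strict distinction is what makes the dualisation from reachability to safety go through without having to force the targets into sinks.
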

\begin{proof}[Proof (sketch)]
    It can be shown that quantitative reachability CQs with \emph{strict bounds} in MDPs can be reduced to quantitative safety DQs (with non-strict bounds) in SGs.
    We prove that strict-bounded reachability CQs in MDPs are $\PSPACE$-hard which is done by analyzing a construction from \cite[Lem. 2]{RRS17} in greater detail \iftoggle{arxiv}{(see Appendix~\ref{app:comp-SS-S-quant})}{(see \cite{arxiv})}.
\end{proof}

$\PSPACE$-hardness in the previous two lemmas is caused by the exponential size of the goal-unfolding that can, as we have shown in Section~\ref{sec:strat_comp_gen}, not be avoided in general.
Indeed, for sink queries the complexity of quantitative (mixed) DQs drops to $\NP$-complete in the general-strategies case~\cite[Corollary 1]{cfk13}, where the upper bound stems from the fact that MD strategies suffice and can be verified in polynomial time using linear programming~\cite{EKVY08}.

Regarding upper bounds on quantitative DQ feasibility, we remark that the problem can be decided in $\NEXP$:
Guess an (exponentially large) MD strategy $\maxstrat$ in the goal unfolding, consider the induced MDP $\game^{\maxstrat}$ and verify in polynomial time in the size of $\game^{\maxstrat}$ that \Spoiler\ does not have a strategy violating all thresholds at once by using the multi-objective MDP algorithm from~\cite{EKVY08}.
We currently do not know of a tighter upper bound.

\subsection{Computational Complexity under Deterministic Strategies}
\label{sec:ccomp-det-strat}

As in Section~\ref{sec:strat_comp_det}, we consider only sink queries in this section.

\begin{restatable}{lemma}{lemcompSDSRqual}
    \label{lem:comp-SD-SR-qual}
    Under deterministic strategies, qualitative reachability and safety DQs are $\PSPACE$-hard, even in the case of sink queries.
\end{restatable}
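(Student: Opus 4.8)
The plan is to reduce from the $\PSPACE$-complete validity problem for quantified Boolean formulas (QBF), reusing and adapting the reduction of~\cite{FH10}, and to derive both the safety and the reachability case — in their sink versions, which subsume the general ones — from essentially a single construction.

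For the safety case, observe that a non-stochastic deterministic game is precisely an SG with $\Sprob = \emptyset$, and there the measure $\probability^{\maxstrat,\minstrat}$ induced by deterministic strategies $\maxstrat,\minstrat$ is a point mass on a single play; hence $\bigvee_j \probability^{\maxstrat,\minstrat}(\safe T_j) \geq 1$ holds iff that play stays in some $T_j$ forever. Thus ``qualitative safety DQ achievable under deterministic strategies'', restricted to $\Sprob = \emptyset$, is literally the problem of deciding whether Eve wins a qualitative safety DQ game on a finite graph, which~\cite[Thm.~1]{FH10} shows $\PSPACE$-hard by a reduction from QBF. To additionally obtain \emph{sink} queries, I would post-process the game $G$ produced there: since losing the obligation $\safe T_j$ is irreversible in~\cite{FH10}'s construction (or can be made so by a routine normalization trapping losing plays, per objective, in an absorbing region), every transition that first enters $\overline{T_j}$ can be rerouted to a fresh sink $\bot_j$ and the $j$-th safe set replaced by $S' \setminus \{\bot_j\}$. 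In the resulting game $G'$ a play avoids $\overline{T_j}$ forever iff it never reaches $\bot_j$, so Eve achieves $\bigvee_j \probability(\safe(S' \setminus \{\bot_j\})) \geq 1$ under deterministic strategies iff the input formula is valid; and each unsafe set $\{\bot_j\}$ is a singleton sink, so this is a sink safety DQ.

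For the reachability case the picture is genuinely different, since in non-stochastic games reachability DQs are trivially easy (they reduce to reaching the union of the targets, see Section~\ref{sec:strat_comp_gen}); the hardness must therefore exploit the probabilistic structure. I would turn $G'$ into a \emph{stopping} game $G''$ by inserting, before every state, a fresh probabilistic state that with probability $\delta \in (0,1)$ (say $\delta = \nicefrac12$) moves to a new ``timeout'' sink $\checkmark$, distinct from all $\bot_j$, and with probability $1{-}\delta$ proceeds as in $G'$. These transitions never create a new path towards any $\bot_j$, so for deterministic $\maxstrat,\minstrat$ the state $\bot_j$ is reachable in the induced Markov chain of $G''$ iff the corresponding play visits $\overline{T_j}$ in $G'$; hence Eve still achieves the sink safety DQ in $G''$ iff the formula is valid. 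Moreover every play of $G''$ reaches a sink almost surely, so $\probability^{\maxstrat,\minstrat}(\reach(\mathrm{Sinks}(G'') \setminus \{\bot_j\})) = 1 - \probability^{\maxstrat,\minstrat}(\reach \bot_j)$ for all $j$, and therefore over $G''$ and under deterministic strategies the sink safety DQ $\bigvee_j \probability(\safe(S'' \setminus \{\bot_j\})) \geq 1$ is equivalent to the sink reachability DQ $\bigvee_j \probability(\reach(\mathrm{Sinks}(G'') \setminus \{\bot_j\})) \geq 1$. This yields $\PSPACE$-hardness of qualitative sink reachability DQs, and hence of qualitative reachability DQs in general.

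The whole reduction is clearly polynomial (it adds $k+1$ sinks, where $k$ is the number of clauses, and one probabilistic state per original state). I expect the main obstacle to be twofold: (i)~verifying that the construction of~\cite{FH10} has, or can be cheaply modified to have, the per-objective ``absorbing loss'' property needed for the sinkification — a naive generic fix via the goal-unfolding would incur an exponential blow-up, so one has to argue at the level of their concrete construction; and (ii)~checking that the $\delta$-leakage used in the reachability case preserves achievability under \emph{deterministic} strategies, the key point being that it only ever leads to the fresh sink $\checkmark$ and so cannot help either player reach a $\bot_j$, while it does make the game stopping, which is exactly what turns ``stay safe forever'' into ``reach a good sink''.
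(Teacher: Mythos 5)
There is a genuine gap, and it sits exactly where you suspected: the ``sinkification'' of the \cite{FH10} game. Making the unsafe states absorbing is not a conservative transformation for a \emph{disjunctive} safety query. In the hard instances of \cite{FH10}, \Spoiler\ defeats \Achiever\ only by forcing a single play that visits \emph{all} of the unsafe sets $\overline{T_1},\dots,\overline{T_n}$, and these visits are necessarily spread out along the play (each literal choice in the QBF gadget ``kills'' the clauses inconsistent with it). Once you reroute the first entry into some $\overline{T_j}$ to an absorbing $\bot_j$, the play is trapped in a fresh state that lies in every other safe set, so it vacuously satisfies $\safe T_{j'}$ for all $j'\neq j$ and \Achiever\ wins the sinkified DQ even on negative instances. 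The irreversibility of losing a single safety objective is beside the point; what fails is that a play cannot simultaneously be trapped ``per objective,'' and the only non-exponential fix is not a normalization of the graph but a change of mechanism: at the moment the play would enter $\overline{T_j}$, branch \emph{probabilistically} (say with probability $\nicefrac12$) to a fresh sink marked unsafe for exactly the objectives being violated, and continue with the remaining probability. Then $\probability(\safe T_j)\geq 1$ holds iff no branch point along the (deterministic) play is marked with $j$, the query is a sink query, and the reduction is sound. This branch-and-continue device is precisely what the paper's construction (Figure~\ref{fig:qbf_example}) is built on from the start --- each literal state splits probabilistically between a fresh target sink labelled with the clauses consistent with that literal and the next quantifier gadget --- whereas you deploy probabilistic branching only later, for the stopping construction, where it is not the step that needs it.

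The rest of your architecture is appealing and, given a \emph{correct} sink safety instance $G'$, the reachability half goes through: adding $\delta$-leakage to a fresh sink $\checkmark$ makes the game stopping without creating new paths to any $\bot_j$, and in a stopping game $\probability(\reach(\mathrm{Sinks}\setminus\{\bot_j\}))=1-\probability(\reach \bot_j)$, so the qualitative sink safety DQ and the qualitative sink reachability DQ coincide. This ``safety $\Rightarrow$ reachability via stopping'' step is a genuinely different (and arguably cleaner) route than building a second reduction from scratch, and it correctly isolates why probabilism is indispensable for reachability hardness. But as written the argument does not establish the lemma, because both halves rest on the unsound $G'$; repairing it forces you back to essentially the paper's probabilistic-stamping construction.
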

\begin{proof}[Proof (sketch)]
    Inspired by similar constructions in~\cite{FH10,RRS17}, we reduce from the problem of deciding truth of a quantified Boolean formula (QBF).
    See Figure~\ref{fig:qbf_example} for an example and \iftoggle{arxiv}{Appendix~\ref{app:proof-comp-SD-SR-qual}}{\cite{arxiv}} for the full proof.
\end{proof}

\begin{figure}
    \centering
    \begin{tikzpicture}[node distance = 4mm and 12mm, on grid, initial text=,myArrowStyle]
    \node[state, max,initial] (sx) {$s_1$};
    \node[state, prob, above right=of sx] (x) {$x_1$};
    \node[state, prob, below right=of sx] (nx) {$\overline{x}_1$};
    
    \node[state, min, below right=of x] (sy) {$s_2$};
    \node[state, prob, above right=of sy] (y) {$x_2$};
    \node[state, prob,  below right=of sy] (ny) {$\overline{x}_2$};
    
    \node[state, max, below right=of y] (sz) {$s_3$};
    \node[state, prob,  above right=of sz] (z) {$x_3$};
    \node[state, prob, below right=of sz] (nz) {$\overline{x}_3$};
    
    \node[state,prob,target,col2,above right=of x] (Tx) {2};
    \node[state,prob,target,col1,above right=of y] (Ty) {1};
    \node[state,prob,target,col2,above right=of z] (Tz) {2};
    
    \node[state,prob,target,both cols,below right=of nx] (Tnx) {\scriptsize 1,2};
    \node[state,prob,col2,target,below right=of ny] (Tny) {2};
    \node[state,prob,target,col1,below right=of nz] (Tnz) {1};
    
    \draw[trans] (x) -- (Tx);
    \draw[trans] (y) -- (Ty);
    \draw[trans] (z) -- (Tz);
    \draw[trans] (z) edge[loop above] (z);
    \draw[trans] (nx) -- (Tnx);
    \draw[trans] (ny) -- (Tny);
    \draw[trans] (nz) -- (Tnz);
    \draw[trans] (nz) edge[loop below] (nz);

    \draw[trans] (sx) -- node[above left]{} (x);
    \draw[trans] (sx) -- node[below left]{} (nx);
    \draw[trans] (x) -- (sy);
    \draw[trans] (nx) -- (sy);
    
    \draw[trans] (sy) -- (y);
    \draw[trans] (sy) -- (ny);
    \draw[trans] (y) -- (sz);
    \draw[trans] (ny) -- (sz);

    \draw[trans] (sz) -- node[above left]{} (z);
    \draw[trans] (sz) -- node[below left]{} (nz);
    
    \end{tikzpicture}
    \caption{
        Example reduction from Lemma~\ref{lem:comp-SD-SR-qual} for the QBF $\exists x_1 \forall x_2 \exists x_3 (\overline{x}_1 \wedge x_2 \wedge \overline{x}_3) \vee (\overline{x}_2 \wedge x_3)$.
        Numbers in target states indicate whether they belong to $T_1$ and/or $T_2$.
        The QBF is true as witnessed by \Achiever's strategy that first goes to $\overline{x}_1$ and then to $x_3$ if \Spoiler\ had selected $\overline{x}_2$ and otherwise to $\overline{x}_3$.
    }
    \label{fig:qbf_example}
\end{figure}
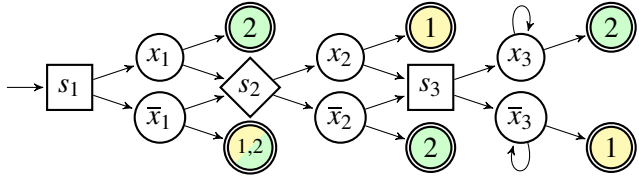

Notably, the corresponding \emph{conjunctive} problem in the setting of Lemma~\ref{lem:comp-SD-SR-qual} (qualitative sink CQ, deterministic strategies) can be solved in $\P$ as it reduces to simply checking if the intersection $\bigcap_{i=1}^n T_i$ can be reached with probability 1.
Contrary to most other results, Lemma~\ref{lem:comp-SD-SR-qual} thus identifies a setting where DQs are \emph{much} harder than CQs.
Moreover, due to the restriction to sink queries, Lemma~\ref{lem:comp-SD-SR-qual} also yields $\PSPACE$-hardness for disjunctions of expected reward objectives under deterministic strategies.
In the general-strategies case, such expected reward DQs are decidable in $\NP$~\cite{cfk13}.
Next we show an upper bound for qualitative \emph{safety} DQs:

\begin{restatable}{lemma}{lemcompSDSqual}
    \label{lem:comp-SD-S-qual}
    In the deterministic-strategies case, qualitative safety sink DQs are decidable in $\EXPTIME$.
\end{restatable}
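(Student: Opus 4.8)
The plan is to decide the problem by reducing it, in exponential time, to a two-player turn-based \emph{safety} game of exponential size, which can then be solved in time polynomial in its size; this gives an $\EXPTIME$ algorithm overall.

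\emph{Reformulation.} Since $\query = \bigvee_{i=1}^n \probability(\safe T_i) \geq 1$ is a sink query, every state of an unsafe set $\overline{T_i} \eqdef S\setminus T_i$ is absorbing, so $\probability^{\maxstrat,\minstrat}(\safe T_i)=1$ holds iff no state of $\overline{T_i}$ is reachable with positive probability in $\game^{\maxstrat,\minstrat}$, i.e., iff $\overline{T_i}$ is disjoint from the reachable part of the support graph of $\game^{\maxstrat,\minstrat}$. Hence, under deterministic strategies, $\query$ is achievable iff \Achiever\ has a deterministic $\maxstrat$ such that \Spoiler\ cannot force \emph{all} of $\overline{T_1},\dots,\overline{T_n}$ to be reachable simultaneously.

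\emph{Product safety game.} I would build a game $\widehat\game$ on states $S \times (2^{\{1,\dots,n\}}\setminus\{\emptyset\})$; a state $(s,B)$ is to be read as ``from $s$, \Achiever\ can keep at least one objective of $B$ satisfied on \emph{every} branch of the play''. Writing $D(t)\eqdef\{i\mid t\in\overline{T_i}\}$, \Achiever's winning region $W$ is the greatest fixpoint of the operator that retains $(s,B)$ iff: (i) $s\in\Smax$ and $(t,B\setminus D(t))\in W$ for some $t\in\choices(s)$ with $B\not\subseteq D(t)$; (ii) $s\in\Smin$ and for every $t\in\choices(s)$ it holds that $B\not\subseteq D(t)$ and $(t,B\setminus D(t))\in W$; (iii) $s\in\Sprob$ with successors $t_1,\dots,t_k$ and there is a \emph{single} objective $c\in B\setminus\bigcup_j D(t_j)$ with $(t_j,\{c\})\in W$ for all $j$. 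The asymmetry between (ii) and (iii) is the key point: at a \Spoiler-controlled state only the realised successor matters, so the surviving objective may depend on \Spoiler's move, whereas at a probabilistic state all successors have positive probability, so \Achiever\ must commit to one common survivor (and is thereafter effectively playing a single safety objective). Adding an absorbing losing state for the failing side conditions, $\widehat\game$ is an ordinary turn-based safety game of size $O(n\cdot|S|\cdot 2^n)$ (clause (iii) being implemented via an auxiliary vertex where \Achiever\ picks $c$), which is solved in polynomial time in its size. I claim $\query$ is achievable iff $(\sinit,\{1,\dots,n\}\setminus D(\sinit))\in W$ (and it is not achievable if this set is empty), which yields the bound.

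\emph{Main obstacle.} The bulk of the work is establishing this equivalence, which I expect to be the hard part. For the ``if'' direction one reads off from a positional winning strategy in $\widehat\game$ an \Achiever\ strategy in $\game$ carrying the $B$-component as memory, and proves by induction over finite prefixes of plays that $B$ stays nonempty and that every objective in $B$ is never killed on the sub-play below it; the two types of branching (one realised \Spoiler\ successor vs.\ all probabilistic successors) must be treated separately. The ``only if'' direction is more delicate: given an achieving deterministic $\maxstrat$, one must exhibit for every reachable sub-play a nonempty admissible commitment set, and in particular argue that below any probabilistic state there is a single objective that $\maxstrat$ keeps alive \emph{everywhere} in that sub-tree --- this is exactly where the sink assumption and the support-graph reformulation are used, and where one exploits that \Achiever's residual strategy after a finite prefix is again a deterministic strategy. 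I would also remark that the construction shows $2^n$ memory suffices for \Achiever\ here, consistent with the $\binom{n}{n/2}$ lower bound of Lemma~\ref{lem:scomp-SD-SR-qual}.
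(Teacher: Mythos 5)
There is a genuine gap in your clause (iii): at a probabilistic state you force \Achiever\ to commit to a \emph{single} objective $c$ that survives below \emph{all} successors against \emph{all} of \Spoiler's continuations, but in the standard semantics the surviving disjunct may depend on \Spoiler's (deterministic) strategy, including on choices he makes strictly below the probabilistic branching. Concretely, take $s\in\Sprob$ with successors $t_1,t_2$ (probability $\nicefrac 1 2$ each), where $t_2$ leads to a sink in $T_1\cap T_2$ and $t_1\in\Smin$ has the two choices $u_1\in\overline{T_1}\cap T_2$ and $u_2\in T_1\cap\overline{T_2}$, both sinks (so this is a sink query). \Achiever\ has no choices and nevertheless achieves $\probability(\safe T_1)\geq 1\vee\probability(\safe T_2)\geq 1$: whichever of $u_1,u_2$ the deterministic \Spoiler\ picks at $t_1$, the \emph{other} unsafe set is reached with probability $0$. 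Yet $(t_1,\{1\})\notin W$ (because of $u_1$) and $(t_1,\{2\})\notin W$ (because of $u_2$), so no single $c$ exists and your test wrongly rejects. The deeper issue is that knowing for which sets $B$ each successor satisfies $(t_j,B)\in W$ does not determine the answer at a probabilistic state: one would need, per successor, the whole family of survivor sets \Spoiler\ can enforce there and then check that every cross-branch intersection is nonempty. This is precisely the non-compositionality isolated by the Observation in Section~\ref{sec:alg_vi}, so no product game over $S\times 2^{\{1,\dots,n\}}$ of this shape can be complete; repairing it would mean tracking antichains of survivor sets per state, which does not obviously stay within $\EXPTIME$.

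For comparison, the paper's proof takes a different route entirely: it first shows that \emph{non}-achievability of a qualitative safety sink DQ is witnessed within a bounded horizon --- if $\maxstrat$ is non-achieving, \Spoiler\ has a counter-strategy reaching every $\overline{T_i}$ with positive probability within $|S|$ steps --- and then decides the complement with a polynomially space-bounded alternating Turing machine that simulates the truncated game, handling probabilistic branching by backtracking with a polynomially bounded stack; membership in $\EXPTIME$ follows from alternating $\PSPACE$ being equal to $\EXPTIME$. Your instinct that the ``only if'' direction at probabilistic states is the delicate part was right, but the claim you flagged there (a single objective kept alive everywhere below a probabilistic state) is exactly the step that fails.
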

\begin{proof}[Proof (sketch)]
    The proof relies crucially on the fact that non-achievability of a qualitative safety DQs can be witnessed after at most a bounded number steps of the game.
    Indeed, if $\maxstrat$ is a non-achieving strategy of \Achiever, then \Spoiler\ has a counter-strategy that can reach all the unsafe sets $\overline{T}$ with \emph{positive} probability after at most $|S|$ steps of the game.
    The result then follows by constructing a polynomially space-bounded alternating Turing machine that simulates the game for at most $|S|$ steps and accepts iff the query is not achievable.
    We handle probabilistic branching via backtracking using a stack whose content remains of polynomial size throughout the execution (see \iftoggle{arxiv}{Appendix~\ref{app:proof-comp-SD-S-qual}}{\cite{arxiv}} for details).
\end{proof}

The above proof cannot simply be extended to reachability because, intuitively, reaching a target with probability $1$ may only occur in the limit.
In fact, the question whether qualitative reachability DQs are decidable under deterministic strategies remains open.

Regarding the quantitative case, \cite[Theorem 3]{cfk13} proves that reachability \emph{CQs} are \emph{undecidable} under deterministic strategies.
Thus with Lemma~\ref{lem:reduction-cq-dq-quant} we also have:
\begin{lemma}
    \label{lem:comp-SD-R-quant}
    Quantitative reachability DQs are undecidable under deterministic strategies.
\end{lemma}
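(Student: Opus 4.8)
The plan is to obtain Lemma~\ref{lem:comp-SD-R-quant} as an immediate corollary of two facts that are already available in the excerpt: (1) the undecidability of quantitative reachability \emph{conjunctive} queries under deterministic strategies, which is \cite[Theorem~3]{cfk13}; and (2) the reduction from quantitative reachability CQs to quantitative reachability DQs under deterministic strategies, which is Lemma~\ref{lem:reduction-cq-dq-quant}. Concretely, I would argue as follows. Suppose, for contradiction, that the achievability problem for quantitative reachability DQs under deterministic strategies were decidable by some algorithm $\mathcal{A}$. Given an instance of the quantitative reachability CQ problem under deterministic strategies, namely an SG $\game$ with query $\query = \bigwedge_{i=1}^n \probability(\reach T_i) \geq x_i$, apply the polynomial-time construction of Lemma~\ref{lem:reduction-cq-dq-quant} to obtain a game $\game'$ and a quantitative reachability DQ $\query'$ such that, under deterministic strategies, some $\maxstrat$ achieves $\query$ in $\game$ iff the corresponding $\maxstrat$ achieves $\query'$ in $\game'$. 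In particular $\query$ is achievable in $\game$ iff $\query'$ is achievable in $\game'$, so running $\mathcal{A}$ on $(\game', \query')$ decides the CQ instance, contradicting \cite[Theorem~3]{cfk13}.

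The main (and essentially only) step requiring care is making precise exactly which problem \cite[Theorem~3]{cfk13} is claimed undecidable and checking that it matches the ``source side'' of Lemma~\ref{lem:reduction-cq-dq-quant}: in both cases it must be the existence of a \emph{deterministic} (arbitrary-memory) strategy for \Achiever\ achieving a quantitative reachability conjunction. The excerpt's related-work discussion states ``If only deterministic strategies are allowed, the exact problem is undecidable~\cite{cfk13}'' and the body repeats ``\cite[Theorem~3]{cfk13} proves that reachability \emph{CQs} are \emph{undecidable} under deterministic strategies'', so this alignment is direct. I would also note in passing that the reduction of Lemma~\ref{lem:reduction-cq-dq-quant} preserves the deterministic-strategies restriction on both players (it only works under that restriction), which is exactly what is needed here; no additional argument about \Spoiler's strategies is required since the source problem is already stated for deterministic strategies.

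I do not anticipate a genuine obstacle: the proof is a one-line reduction chain, and indeed the excerpt already sketches it (``Apply the reduction from Lemma~\ref{lem:reduction-cq-dq-quant} to \ldots''). The only thing I would be slightly careful about is that Lemma~\ref{lem:reduction-cq-dq-quant} is stated in the one-to-one-correspondence form ``$\maxstrat$ achieves $\query$ in $\game$ iff $\maxstrat$ achieves $\query'$ in $\game'$'', so the transfer of \emph{achievability} (the $\exists\maxstrat$ statement) is immediate by quantifying over the correspondence, and since the construction is computable (indeed polynomial time), decidability transfers backwards. Hence quantitative reachability DQs are undecidable under deterministic strategies.
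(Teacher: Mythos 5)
Your proposal is correct and matches the paper's own argument exactly: the paper also obtains this lemma by combining the undecidability of quantitative reachability CQs under deterministic strategies from \cite[Theorem~3]{cfk13} with the reduction of Lemma~\ref{lem:reduction-cq-dq-quant}. Your additional remarks about aligning the source problem and transferring achievability across the strategy correspondence are sound and only make the one-line reduction more explicit.
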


\section{Value Iteration}
\label{sec:alg_vi}
In a nutshell, value iteration (VI) algorithms in general evaluate the $k$-step game $\game^{\leq k}$ using information about the game $\game^{\leq k{-}1}$ up to some reasonable $k$.
In this section, we present a VI-style algorithm for computing the Pareto sets $\val(\game^{\leq k},\query)$ or $\upval(\game^{\leq k},\query')$ for a given game $\game$, DQ $\query$ or CQ $\query'$, and step bound $k \geq 0$.
Consequently, we do not fix a threshold vector $\vec{x}$ in our queries but consider query \emph{templates} instead.
To keep the presentation simple, we focus on \emph{general-strategies} and consider only sink queries.
%

We briefly recall the VI from \cite{cfk13} (subsequently called \emph{CQ-VI}) that for a given CQ $\query$ successively outputs $\val(\game^{\leq k},\query)$ for all $k \geq 0$.
Let $\pcurves$ be the set of all downward-closed polyhedra in $[0,1]^n$ and let $X \in \pcurves^S$.
In the following, we write $X_s$ for $X(s)$.
The update function $F \colon \pcurves^S \to \pcurves^S$ is defined according to Figure~\ref{fig:vi_ops}.
Note that $F$ is well-defined, i.e., always yields downward-closed polyhedra.
For CQ $\query = \bigwedge_{i=1}^n \probability(\genobj_i T_i) \geq x_i$ and each $s \in S$  define the $n$-dimensional zero-one vector $\mathbbm{1}^\query_s$, such that $(\mathbbm{1}^\query_s)_i \eqdef 1$ iff $s \in T_i$ and let $X^0(\query)_s \eqdef \dwc(\mathbbm{1}^\query_s)$.
Then \cite{cfk13} implies that $F^k(X^0(\query))_{s_0} = \val(\game^{\leq k},\query)$ for all $k \geq 0$.

\begin{figure}[t]
    \centering
    {\def\arraystretch{1.3}
        \begin{tabular}{l l l}
        \toprule
        & $F(X)_s$ & $\uviop(\uvidomelem)_s$ \\
        \midrule 
        $s \in \Smin \quad$ & $\bigcap_{t \in \choices(s)} X_t$ & $\bigcup_{t \in \choices(s)} \uvidomelem_t$ \\
        $s \in \Smax$ & $\conv\big(\bigcup_{t \in \choices(s)} X_t \big) \quad$ & $\big\{\, \conv\big( \bigcup_{t \in \choices(s)} X_t\big) \,\mid\, X \inpw \uvidomelem \, \big\}$ \\
        $s \in \Sprob$ & $\sum_{t \in S} P(s,t)X_t$ & $\big\{\, \sum_{t \in S} P(s,t)X_t \,\mid\, X \inpw \uvidomelem\, \big\}$ \\
        \bottomrule
    \end{tabular}}
    \caption{The value iteration operators $F$ and $\uviop$ for computing $\val(\game^{\leq k},\query)$ and $\upval(\game^{\leq k},\query)$, respectively.}
    \label{fig:vi_ops}
    \vspace{-0mm}
\end{figure}

We now address the question whether an iteration analogous to CQ-VI can be devised for \emph{DQs}.
CQ-VI computes the horizon-$(k{+}1)$ Pareto set of any given state by taking only the horizon-$k$ sets of its successors (and the relevant probability distribution) into account.
For DQs, this is impossible in general:

\begin{observation}
    Suppose $s \in S$ has successors $s_1, s_2$.
    In general, the horizon-$k$ Pareto sets of $s_1$ and $s_2$ w.r.t.\ a DQ do \emph{not} uniquely determine the horizon-$(k{+}1)$ Pareto set of $s$.
\end{observation}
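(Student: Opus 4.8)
The plan is to exhibit a concrete counterexample: two small gadgets that, when plugged in as the successors $s_1$ and $s_2$ of a probabilistic state $s$, have the same horizon-$k$ Pareto sets but induce different horizon-$(k{+}1)$ Pareto sets at $s$. The underlying reason is that Pareto sets for DQs are in general non-convex, whereas the update at a probabilistic state forms a Minkowski-style convex combination $\sum_t P(s,t)X_t$; and a $p$-convex combination is \emph{not} determined by the downward-closures of the summands alone once those sets fail to be convex. So I would look for two downward-closed, non-convex sets $A, B \subseteq [0,1]^2$ with $A = B$ as sets but which ``came about'' via different internal structure — except that of course if $A=B$ literally as sets then $pX_{s_1} + (1{-}p)X_{s_2}$ is the same. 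The subtle point the observation is really making is that the horizon-$k$ Pareto set is the \emph{downward closure} of the genuinely achievable points, and the set of achievable points (before downward closure, or rather the precise geometry) matters: what is lost is which \emph{combinations} of thresholds across the two successor subgames a single strategy can jointly realize.

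Concretely, I would take $n=2$ and build $s_1$ and $s_2$ so that $\val(\game^{\leq k},\query)$ restricted to each is, say, the triangle $\dwc(\conv\{(1,0),(0,1)\})$ in both cases, but in $s_1$ this triangle is realized ``coherently'' — there is for each point on the frontier a single continuation — while in $s_2$ achieving $(1,0)$ versus $(0,1)$ forces different earlier choices that interact with the play before reaching $s$. Since $s$ is probabilistic with $P(s,s_1)=P(s,s_2)=\tfrac12$, the genuine horizon-$(k{+}1)$ Pareto set at $s$ is $\{\tfrac12\vec{u}+\tfrac12\vec{v}\mid \vec u \text{ achievable from } s_1,\ \vec v \text{ achievable from } s_2\}$ \emph{where the pair $(\vec u,\vec v)$ must be jointly realizable by one strategy}. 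When $s_1,s_2$ are truly separate subgames this joint condition is free, but the point of the observation is that in a real game the "successors" are typically entered via a shared history: I would therefore phrase it as in the statement — $s$ has successors $s_1,s_2$, and we compare two whole games that agree on the subgames rooted at $s_1$ and $s_2$ (hence agree on their horizon-$k$ Pareto sets) but differ elsewhere, yielding different horizon-$(k{+}1)$ sets at $s$. Actually the cleanest reading: two games agreeing below $s_1$ and below $s_2$ at horizon $k$ but where the reachability/safety \emph{target membership} of $s_1$ or $s_2$ itself differs, so the horizon-$(k{+}1)$ value at $s$ (which adds one step and may tick a bit in the goal-unfolding) changes. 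I would construct exactly such a pair.

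The concrete construction I would use: let $\query = \probability(\reach T_1)\ge x_1 \vee \probability(\reach T_2)\ge x_2$. Take a state $s\in\Sprob$ with $P(s,s_1)=P(s,s_2)=\tfrac12$. In the first game, let $s_1$ be an \Achiever-controlled state from which she can reach a sink in $T_1$ surely or a sink in $T_2$ surely, and symmetrically for $s_2$; then from $s$ the horizon-$2$ (for appropriate small $k$, $k{+}1$) Pareto set is the full square, since \Achiever can aim both halves at $T_1$. In the second game, keep the subgames rooted at $s_1$ and $s_2$ identical at horizon $k$ (same Pareto sets — the triangle if we make $s_1,s_2$ \Spoiler-controlled with the same reachable sinks), but make $s_1$ itself a member of $T_1$ in one game and not the other, or adjust whether $s$ is counted: since the horizon-$k$ Pareto set of $s_1$ is reported \emph{at $s_1$ after the goal bits for $s_1$ have been set}, the one-step-earlier view from $s$ that adds whether $s_1\in T_i$ is extra information not contained in $\val(\game^{\leq k},\query)_{s_1}$. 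Displaying two such fully-specified $5$–$6$ state games and reading off $\val(\game^{\leq k+1},\query)_s$ in each gives the result.

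The main obstacle, and the thing I would be most careful about, is making the counterexample \emph{honest}: I must ensure the two games genuinely agree on $\val(\game^{\leq k},\query)_{s_1}$ and $\val(\game^{\leq k},\query)_{s_2}$ (and on the transition probabilities out of $s$) while disagreeing on $\val(\game^{\leq k+1},\query)_s$ — it is tempting but wrong to change something ``below'' $s_1$ that also perturbs its horizon-$k$ Pareto set. The safe route is to locate the difference \emph{exactly at the interface}: change whether $s_1$ (or $s_2$) itself lies in a target set $T_i$, or equivalently insert/remove a single intermediate $T_i$-labelled state between $s$ and $s_1$. That single bit changes the goal-unfolding contribution seen from $s$ but, because the Pareto set of $s_1$ is evaluated with $s_1$'s own goal bits already recorded, leaves $\val(\game^{\leq k},\query)_{s_1}$ untouched. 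Verifying that invariance — i.e. that the horizon-$k$ value at $s_1$ really is insensitive to the edge coming \emph{into} it — together with the non-convexity argument that the two resulting $p$-combinations at $s$ differ, is the crux; everything else is routine inspection of a tiny game, so I would not belabour the arithmetic.
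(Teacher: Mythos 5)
Your diagnosis of the underlying phenomenon is correct: the DQ Pareto set of a state is a lossy summary, since it records, for each threshold vector, only whether \emph{some} coordinate can be cleared, and forgets which probability vectors a single strategy actually realizes jointly. That is exactly what the paper's counterexample exploits. However, none of your concrete constructions meets the premise of the observation, namely that the two pairs of successors have \emph{identical} horizon-$k$ Pareto sets. In your first sketch, making $s_1,s_2$ \Achiever-controlled in one game and \Spoiler-controlled in the other changes their Pareto sets (full box versus triangle), so the comparison is void. Your fallback mechanism --- toggling whether $s_1$ itself belongs to some $T_i$ --- does not leave $\val(\game^{\leq k},\query)_{s_1}$ untouched: the horizon-$k$ Pareto set of $s_1$ is by definition that of the game \emph{started} at $s_1$, so for reachability $s_1 \in T_1$ forces $\probability(\reach T_1) = 1$ from $s_1$ and inflates that Pareto set to the full box (dually for safety). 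Your justification via ``goal bits already recorded at $s_1$'' cuts the other way: if the bit for $T_1$ is recorded at $s_1$, then the two games disagree on that recorded bit and hence on the Pareto set reported at $s_1$. Inserting an intermediate target-labelled state between $s$ and $s_1$ avoids this but changes the successor set of $s$, so it no longer instantiates the statement. You correctly flag the invariance of $\val(\game^{\leq k},\query)_{s_1}$ as the crux, but the mechanism you propose fails precisely there.

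The missing ingredient is a pair of successor states with \emph{literally equal} DQ Pareto sets that nevertheless realize different sets of probability vectors. The paper obtains this with the safety DQ $\probability(\safe \overline{T_1}) \geq x_1 \vee \probability(\safe \overline{T_2}) \geq x_2$: an \Achiever-state that must commit to one of $T_1,T_2$ realizes only convex combinations of $(1,0)$ and $(0,1)$, while a neutral sink realizes $(1,1)$; both have the full box $[0,1]^2$ as DQ Pareto set, because a disjunction is already satisfied when one coordinate equals $1$. Pairing each with an identical \Spoiler-controlled successor (triangle Pareto set) under a fair coin makes $(0.75,0.75)$ achievable from one combined state but not from the other. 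Without a device of this kind --- two states whose DQ Pareto sets agree but whose achievable probability vectors do not --- the observation is not established.
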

\begin{proof}
    Consider the game in Figure \ref{fig:no_analog} (left) and the DQ $\query = \probability(\safe \overline{T_1}) \geq x_1 \vee \probability(\safe \overline{T_2}) \geq x_2$.
    The horizon-$1$ Pareto sets of $s_1, s_2$ and $t_1, t_2$, as well as the horizon-$2$ sets of $s_0$ and $t_0$ are sketched next to the corresponding state.
    We claim that the threshold vector $(x_1, x_2) = (0.75, 0.75)$ is achievable from $t_0$, but not from $s_0$:
    As deterministic strategies suffice for \Achiever, we can assume by symmetry that she moves to $T_1$ in $s_1$.
    But then \Spoiler\ can respond by moving to $T_2$ in $s_2$ and both $\safe \overline{T_1}$ and $\safe \overline{T_2}$ are satisfied with probability exactly $0.5$.
    Thus $(0.75, 0.75)$ is not achievable.
    However, at $t_2$ we can assume by symmetry that \Spoiler\ moves to $T_1$ with probability $\geq 0.5$ and so $\safe \overline{T_2}$ is satisfied with probability $ \geq 0.75$ from $t_0$.
\end{proof}

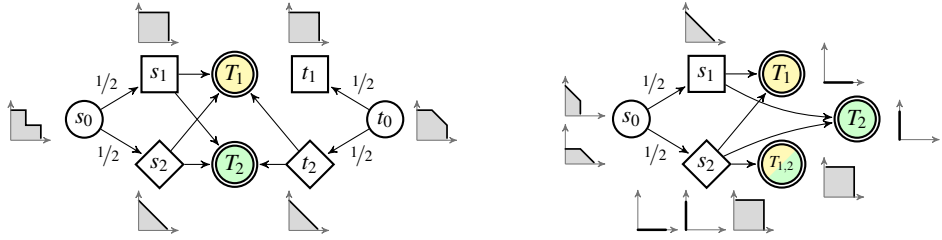
\begin{figure}[t]
    \centering
    \begin{tikzpicture}[node distance=6mm and 10mm, every node/.style={scale=0.8}, on grid, myArrowStyle]
    \node[prob] (prob) {$s_0$};
    \node[max,above right=of prob] (s1) {$s_1$};
    \node[min,below right=of prob] (s2) {$s_2$};
    \node[prob,target,col1,right=of s1] (T1) {$T_1$};
    \node[prob,target,col2,right=of s2] (T2){$T_2$};
    
    \draw[trans] (prob) -- node[above left=-1mm] {\small $\nicefrac{1}{2}$} (s1);
    \draw[trans] (prob) -- node[below left=-1mm] {\small  $\nicefrac{1}{2}$}(s2);
    \draw[trans] (s1) -- (T1);
    \draw[trans] (s1) -- (T2);
    \draw[trans] (s2) -- (T1);
    \draw[trans] (s2) -- (T2);
    
    \node[below = 6mm of s2] {\tikz[scale=.5]{
            \filldraw[black!15] (0,1) -- (1,0) -- (0,0) -- cycle;
            \draw[pcurve] (0,1) -- (1,0);
            \drawaxes
    }};
    
    \node[above= 7mm of s1] {\tikz[scale=.5]{
            \filldraw[black!15] (0,1) -- (1,1) -- (1,0) -- (0,0) -- cycle;
            \draw[pcurve] (0,1) -- (1,1) -- (1,0);
            \drawaxes
    }};
    
    \node[left= 7mm of prob] {\tikz[scale=.5]{
            \filldraw[black!15] (0,1) -- (.5,1) -- (.5,.5) -- (1,.5) -- (1,0) -- (0,0) --cycle;
            \draw[pcurve] (0,1) -- (.5,1) -- (.5,.5) -- (1,.5) -- (1,0);
            \drawaxes
    }};
    
    \node[prob, right = 40mm of prob] (prob) {$t_0$};
    \node[max,above left=of prob] (s1) {$t_1$};
    \node[min,below left=of prob] (s2) {$t_2$};
    
    \draw[trans] (prob) -- node[above right=-1mm] {\small$\nicefrac{1}{2}$} (s1);
    \draw[trans] (prob) -- node[below right=-1mm] {\small$\nicefrac{1}{2}$}(s2);
    \draw[trans] (s2) -- (T1);
    \draw[trans] (s2) -- (T2);
    
    \node[below = 6mm of s2] {\tikz[scale=.5]{
            \filldraw[black!15] (0,1) -- (1,0) -- (0,0) -- cycle;
            \draw[pcurve] (0,1) -- (1,0);
            \drawaxes
    }};
    
    \node[above= 7mm of s1] {\tikz[scale=.5]{
            \filldraw[black!15] (0,1) -- (1,1) -- (1,0) -- (0,0) -- cycle;
            \draw[pcurve] (0,1) -- (1,1) -- (1,0);
            \drawaxes
    }};
    
    \node[right= 7mm of prob] {\tikz[scale=.5]{
            \filldraw[black!15] (0,1) -- (.5,1) -- (1,.5) -- (1,0) -- (0,0) --cycle;
            \draw[pcurve] (0,1) -- (.5,1)  -- (1,.5) -- (1,0);
            \drawaxes
    }};
    \end{tikzpicture}
    \hspace{1cm}
    \begin{tikzpicture}[node distance=6mm and 10mm, on grid,every node/.style={scale=0.8}, myArrowStyle]
    \node[prob] (prob) {$s_0$};
    \node[max,above right=of prob] (s1) {$s_1$};
    \node[min,below right=of prob] (s2) {$s_2$};
    \node[prob,target,col1,right=of s1] (T1) {$T_1$};
    \node[prob,target,col2,right=30mm of prob] (T2){$T_2$};
    \node[prob,target,both cols,right=of s2] (T12){\scriptsize $T_{1,2}$};

    \draw[trans] (prob) -- node[above left=-1mm] {\small$\nicefrac{1}{2}$} (s1);
    \draw[trans] (prob) -- node[below left=-1mm] {\small$\nicefrac{1}{2}$}(s2);
    \draw[trans] (s1) -- (T1);
    \draw[trans] (s1) edge[bend right=13] (T2);
    \draw[trans] (s2) -- (T1);
    \draw[trans] (s2) edge[bend left=8] (T2);
    \draw[trans] (s2) -- (T12);
    
    \node[above =7mm of s1,xshift=0] {\tikz[scale=.5]{
            \filldraw[black!15] (0,1) -- (1,0) -- (0,0) -- cycle;
            \drawaxes
            \draw[pcurve] (0,1) -- (1,0);
    }};
    
    \node[below=6mm of s2,xshift=-8mm] {\tikz[scale=.5]{
            \drawaxes
            \draw[pcurve,very thick] (0,0) -- (1,0);
            
    }};
    \node[below=6mm of s2,xshift=0] {\tikz[scale=.5]{
            \drawaxes
            \draw[pcurve,very thick] (0,0) -- (0,1);
            
    }};
    \node[below=6mm of s2,xshift=8mm] {\tikz[scale=.5]{
            \filldraw[black!15] (0,1) -- (1,1) -- (1,0) -- (0,0) --cycle;
            \draw[pcurve] (1,0) -- (1,1) -- (0,1);
            \drawaxes
    }};
    
    \node[left=of prob,yshift=4mm,xshift=5mm] {\tikz[scale=.5]{
            
            \filldraw[black!15] (0,1) -- (.5,.5) -- (.5,0) -- (0,0) --cycle;
            \drawaxes
            \draw[pcurve] (0,1) -- (.5,.5) -- (.5,0);
    }};
    \node[left=of prob,yshift=-4mm,xshift=5mm] {\tikz[scale=.5]{
            
            \filldraw[black!15] (0,0.5) -- (.5,.5) -- (1,0) -- (0,0) --cycle;
            \drawaxes
            \draw[pcurve] (0,0.5) -- (.5,.5) -- (1,0);
    }};

    \node[right=of T1,xshift=-2mm,yshift=2mm] {\tikz[scale=.5]{
            \drawaxes
            \draw[pcurve,very thick] (0,0) -- (1,0);
            
    }};
    \node[right=of T2,xshift=-2mm] {\tikz[scale=.5]{
            \drawaxes
            \draw[pcurve,very thick] (0,0) -- (0,1);
    }};
    
    \node[right=of T12,xshift=-2mm,yshift=-2mm] {\tikz[scale=.5]{
            \filldraw[black!15] (0,1) -- (1,1) -- (1,0) -- (0,0) --cycle;
            \draw[pcurve] (1,0) -- (1,1) -- (0,1);
            \drawaxes
    }};
    \end{tikzpicture}
    \caption{Left: The successors of both $s_0$ and $t_0$ have the same Pareto set w.r.t.\ the DQ $\probability(\safe \overline{T_1}) \geq x_1 \vee \probability(\safe \overline{T_2}) \geq x_2$, but the Pareto sets of $s_0$ and $t_0$ are different. Right: Example run of Algorithm~\ref{alg:VI} for $k=2$ and query $\probability(\reach T_1) \geq x_1 \wedge \probability(\reach T_2) \geq x_2$. At $s_2$, the rightmost polyhedron is removed by the $\mu$-operation in line 4. The result is the intersection of the two polyhedra at $s_0$.}
    \label{fig:no_analog}
\end{figure}

%

Intuitively, the example in Figure~\ref{fig:no_analog} demonstrates that---unlike in CQ-VI---the Pareto sets alone do not convey enough information to allow for a sound VI.
In the remainder of this section we present a work-around for this problem.
The idea is to account for the missing information by extending the domain of VI to \emph{sets} of Pareto sets.
We will not work with DQs directly but with CQs in the $\forall\exists$-semantics.
This is justified because \emph{non-achievability} of a DQ can be recast as follows:
\begin{equation}
    \label{eq:dq-to-cq}
     \neg \, \exists \maxstrat \, \forall \minstrat \, \bigvee_{i=1}^n \probability^{\maxstrat,\minstrat}(\genobj_i T_i) \geq x_i 
    \quad \iff \quad
    \forall \maxstrat \, \exists \minstrat \, \bigwedge_{i=1}^n \probability^{\maxstrat,\minstrat}(\overline{\genobj_i} \overline{T_i}) > 1 - x_i
\end{equation}
where $\overline{\reach} = \safe$ and $\overline{\safe} = \reach$.
That is, for deciding achievability of a DQ $\query = \bigvee_{i=1}^n \probability(\genobj_i T_i) \geq x_i $, we can equivalently consider the dual CQ $\overline{\query} \eqdef \bigwedge_{i=1}^n \probability(\overline{\genobj_i} \overline{T_i}) > 1{-}x_i$ from \eqref{eq:dq-to-cq} under the $\forall\exists$-semantics in the game $\tilde{\game}$ where the roles of \Spoiler\ and \Achiever\ have been swapped.
In fact, with \eqref{eq:dq-to-cq}, the whole Pareto set $\val(\game,\query)$ can be recovered from $\upval(\tilde{\game},\overline{\query})$.

To define our VI, we introduce some auxiliary notation first.
$\pset{\pcurves}$ denotes the powerset of $\pcurves$. 
For mappings $X \in \pcurves^S$ and $\mathcal{X} \in \pset{\pcurves}^S$ from states to (sets of) polyhedra, we write $X \inpw \mathcal{X}$ iff $X(s) \in \mathcal{X}(s)$ for all $s \in S$.
Further, we let $\{X\}$ be the lifting of $X$ to $\pset{\pcurves}^S$, i.e., $\{X\}_s \eqdef \{X_s\}$.
Formally, for any fixed CQ $\query$ our new VI can be seen as a function $\uviop \colon \pset{\pcurves}^S \to \pset{\pcurves}^S$ and is defined according to Figure~\ref{fig:vi_ops}.
The iteration is started with $\mathcal{X}^0(\query) \eqdef \{X^0(\query)\}$, where $X^0(\query)$ is the same initial element as for CQ-VI.
\begin{lemma}
    \label{lem:uvicorrect}
     For all $k\geq 0$ we have that $\bigcap \left(\uviop^k(\mathcal{X}^0)_{s_0}\right) = \upval(\game^{\leq k}, \query)$.
\end{lemma}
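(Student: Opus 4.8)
The plan is to first determine the \emph{entire} set $\uviop^k(\mathcal{X}^0)_{s_0}$ and only afterwards take the intersection. For an Adam strategy $\tau$, let $\game^\tau$ be the induced MDP, and for a state $s$ write $v^{\leq k}_\tau(s) \in \pcurves$ for \Achiever's horizon-$k$ Pareto set in $\game^\tau$ with initial state $s$; this is a downward-closed polytope since $\game^\tau$ is a one-player game restricted to $k$ steps. The invariant I would prove, by induction on $k$ and simultaneously for all states $s$, is
\[
  \uviop^k(\mathcal{X}^0)_s ~=~ \bigl\{\, v^{\leq k}_\tau(s) \,\bigm|\, \tau \text{ a deterministic Adam strategy of } \game^{\leq k} \,\bigr\}\,.
\]
The base case holds because $\game^{\leq 0}$ has no transitions and both sides reduce to $\{\dwc(\mathbbm{1}^\query_s)\}$, the initial element of CQ-VI. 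For the step I would write $\uviop^{k+1} = \uviop \circ \uviop^k$ and verify the three cases of Figure~\ref{fig:vi_ops}: a deterministic strategy of $\game^{\leq k+1}$ makes at most one choice at $s$ (when $s \in \Smin$) and then, below each successor $t$, follows \emph{some} deterministic strategy of $\game^{\leq k}$, and these continuations may be picked \emph{independently} for the different $t$; ranging over all of them yields precisely the union $\bigcup_t$ at $\Smin$-states (Adam's choice) and, via the relation $X \inpw \uviop^k(\mathcal{X}^0)$, the sets $\{\conv(\bigcup_t X_t)\}$ at $\Smax$-states and $\{\sum_t P(s,t)X_t\}$ at $\Sprob$-states -- which is exactly $\uviop$, since those are the rules by which one-step MDP Pareto sets compose. (One also uses that each $\uviop^k(\mathcal{X}^0)_t$ is non-empty, so partial choices of successor strategies always extend to global ones.)

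By the invariant, $\bigcap(\uviop^k(\mathcal{X}^0)_{s_0}) = \bigcap_{\tau\text{ det.}} v^{\leq k}_\tau(s_0)$, so it remains to show the latter equals $\upval(\game^{\leq k},\query)$. I would obtain this by passing to the half-space description of Pareto sets. For a \emph{fixed} Adam strategy $\tau$, the game $\game^\tau$ restricted to $k$ steps is a finite-horizon MDP, so $v^{\leq k}_\tau(s_0)$ is convex~\cite{EKVY08,cfk13}; being moreover downward-closed and compact, it satisfies
\[
  v^{\leq k}_\tau(s_0) ~=~ \bigcap_{\vec w \geq \vec 0} \bigl\{\, \vec x \,\bigm|\, \vec w\cdot\vec x \leq \mathrm{val}_\tau(\vec w) \,\bigr\}\,,
\]
where $\mathrm{val}_\tau(\vec w) \eqdef \sup_{\sigma} \sum_i w_i\,\probability^{\sigma,\tau}(\genobj_i T_i)$ is the value of the \emph{scalarized} single objective in that MDP. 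Intersecting over $\tau$ on both sides shows that for any class $C$ of Adam strategies, $\bigcap_{\tau \in C} v^{\leq k}_\tau(s_0) = \bigcap_{\vec w \geq \vec 0}\{\vec x \mid \vec w\cdot\vec x \leq \inf_{\tau \in C}\mathrm{val}_\tau(\vec w)\}$. Now $\inf_\tau \mathrm{val}_\tau(\vec w)$ is the $\forall\exists$-value of the \emph{single-objective}, finite-horizon stochastic game for the scalarized objective; since single-objective reachability/safety -- hence scalarized -- finite-horizon games are determined and admit optimal deterministic strategies for both players, this infimum is already attained by a deterministic $\tau$. Hence the right-hand side is the same for $C = \{\text{all } \tau\}$ (giving $\upval(\game^{\leq k},\query)$) and for $C = \{\text{deterministic } \tau\}$ (giving $\bigcap(\uviop^k(\mathcal{X}^0)_{s_0})$), which concludes the proof.

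I expect the main obstacle to be exactly this last reduction: one cannot argue ``same family of Pareto sets, hence same intersection'', because $\{v^{\leq k}_\tau(s_0) \mid \tau \text{ arbitrary}\}$ is strictly larger than $\{v^{\leq k}_\tau(s_0) \mid \tau \text{ deterministic}\} = \uviop^k(\mathcal{X}^0)_{s_0}$ -- already a single randomized Adam move at a $\Smin$-state produces convex combinations of successor Pareto sets that $\uviop$ never outputs. The argument has to go through scalarization, so that the non-determinacy of the multi-objective game is neutralized, one direction $\vec w$ at a time, by the determinacy of single-objective games. Establishing the invariant is more laborious but conceptually routine; the only delicate point there is the boundary behavior of the $k$-step restriction (the error sink must count as safe for safety objectives and as non-reaching for reachability objectives, consistently with $X^0(\query)$). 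Since we work with sink queries, the goal-unfolding is trivial and all sets involved remain inside $\pcurves$, so no further care is needed.
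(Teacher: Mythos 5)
Your proposal is correct and follows essentially the same route as the paper: it establishes the invariant that $\uviop^k(\mathcal{X}^0)_s$ is exactly the family of horizon-$k$ Pareto sets of the MDPs induced by deterministic $k$-step strategies of \Spoiler, and then obtains $\upval(\game^{\leq k},\query)$ as the intersection coming from the outer $\forall$-quantifier. Your explicit scalarization argument (reducing the intersection over \emph{all} of \Spoiler's strategies to the intersection over deterministic ones via support functions and determinacy of finite-horizon single-objective games) is a sound way to close the gap that the paper's sketch leaves implicit in the phrase ``the final intersection is due to the outer $\forall$-quantifier.''
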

\begin{proof}[Proof (sketch)]
    It can be shown that for all $k \geq 0$, the set $\uviop^k(\mathcal{X}^0)_{s_0}$ contains the Pareto sets achievable by \Achiever\ in the $k$-step MDPs induced by each possible \emph{deterministic} $k$-step strategy $\minstrat$ of \Spoiler\ \iftoggle{arxiv}{(Appendix~\ref{app:proof-uvicorrect})}{\cite{arxiv}}.
    The final intersection over $\uviop^k(\mathcal{X}^0)_{s_0}$ is due to the outer $\forall$-quantifier in the $\forall\exists$-semantics.
\end{proof}


The iteration according to Lemma~\ref{lem:uvicorrect} is essentially equivalent to enumerating all possible deterministic $k$-step strategies of \Spoiler\ and analyzing the induced MDPs.
In the worst case, there are doubly exponentially many (in $k$) such strategies and thus the number of polyhedra $|\uviop^k(\mathcal{X}^0)_s|$ maintained per state $s \in S$ in the $k$-th step is also at most doubly exponential.
This can be improved.
In general, not all $k$-step strategies have to be considered:
If for $k$-step  strategies $\minstrat, \minstrat'$ it holds that the set of points achievable by \Achiever\ in the induced MDP $\game^\minstrat$ is contained in set of points achievable in $\game^{\minstrat'}$, then only $\minstrat$ is relevant and $\minstrat'$ can be discarded.
Intuitively, \Spoiler\ would always (independently of the thresholds $\vec{x}$) prefer $\minstrat$ over $\minstrat'$ in such a situation.
We can incorporate this observation into our value iteration: Let $\mu \colon \pset{\pcurves}^S \to \pset{\pcurves}^S$ be the function that removes the non-inclusion-minimal polyhedra of each $\mathcal{X}_s$.
We can then iterate $\mu \circ \uviop$ instead of $\uviop$ without changing the result of the ``final intersection'' in Lemma~\ref{lem:uvicorrect}. We summarize the overall procedure as Algorithm~\ref{alg:VI}.


\begin{restatable}{theorem}{thmalgcorrect}
    \label{thm:alg-correct}
    Algorithm~\ref{alg:VI} is correct.
\end{restatable}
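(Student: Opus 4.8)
The plan is to derive correctness of Algorithm~\ref{alg:VI} from Lemma~\ref{lem:uvicorrect} by showing that replacing $\uviop$ with $\mu \circ \uviop$ does not change the value $\bigcap(\cdot)_{\sinit}$ returned after $k$ iterations, and then transferring the result to DQ inputs via the duality \eqref{eq:dq-to-cq}. Concretely, on a DQ $\query$ the algorithm internally runs on the dual CQ $\overline{\query}$ in the $\forall\exists$-semantics in the game $\tilde{\game}$ where the roles of \Spoiler\ and \Achiever\ are swapped, and $\val(\game^{\leq k},\query)$ is recoverable from $\upval(\tilde{\game}^{\leq k},\overline{\query})$; hence it suffices to establish $\bigcap\big((\mu\circ\uviop)^k(\mathcal{X}^0)_{\sinit}\big)=\upval(\game^{\leq k},\query)$ for CQs $\query$.

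First I would record an invariant: every family $\mathcal{X}(s)$ produced from $\mathcal{X}^0$ by applications of $\uviop$ and $\mu$ is finite and nonempty. This holds for the initial singletons and is preserved by both operators, using $\choices(s)\neq\emptyset$ and the existence of global selections $X\inpw\mathcal{X}$ (which requires nonemptiness of every $\mathcal{X}(t)$). Finiteness guarantees that inclusion-minimal elements of each $\mathcal{X}(s)$ exist. I would then introduce the preorder $\mathcal{X}\preceq\mathcal{Y}$ defined by: for every $s$ and every $Y\in\mathcal{Y}(s)$ there is $X\in\mathcal{X}(s)$ with $X\subseteq Y$; write $\mathcal{X}\approx\mathcal{Y}$ for mutual domination. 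Two facts then follow essentially by definition: $\mu(\mathcal{X})\approx\mathcal{X}$ (each element of $\mathcal{X}(s)$ contains a minimal one, which survives $\mu$, and $\mu(\mathcal{X})(s)\subseteq\mathcal{X}(s)$), and $\mathcal{X}\approx\mathcal{Y}$ implies $\bigcap\mathcal{X}(s)=\bigcap\mathcal{Y}(s)$ for all $s$ (a point in one intersection lies in every member of the other by the defining property of $\preceq$).

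The technical core is monotonicity of $\uviop$ with respect to $\preceq$, which I would prove by a case distinction following Figure~\ref{fig:vi_ops}: for $s\in\Smin$ the union over $\choices(s)$ inherits domination coordinate-wise; for $s\in\Smax$ and $s\in\Sprob$, an element of $\uviop(\mathcal{Y})_s$ has the form $\conv\big(\bigcup_{t\in\choices(s)}Y_t\big)$, respectively $\sum_t P(s,t)Y_t$, for a selection $Y\inpw\mathcal{Y}$, and replacing each relevant $Y_t$ by a dominating $X_t\in\mathcal{X}(t)$ and completing this to a global selection $X\inpw\mathcal{X}$ yields a correspondingly smaller element of $\uviop(\mathcal{X})_s$, using that convex hull and the weighted Minkowski sum $\sum_t P(s,t)(\cdot)$ are monotone under $\subseteq$ in each argument. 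Applying this in both directions shows that $\uviop$ preserves $\approx$ as well. An induction on $k$ then gives $(\mu\circ\uviop)^k(\mathcal{X}^0)\approx\uviop^k(\mathcal{X}^0)$: equality at $k=0$, and in the step $\mu(\uviop(\mathcal{Z}))\approx\uviop(\mathcal{Z})\approx\uviop(\mathcal{Y})$ from the induction hypothesis $\mathcal{Z}\approx\mathcal{Y}$ and $\uviop$-monotonicity. Taking $\bigcap(\cdot)_{\sinit}$ on both sides and invoking Lemma~\ref{lem:uvicorrect} settles the CQ case, and \eqref{eq:dq-to-cq} yields the DQ case.

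I expect the $\Smax$/$\Sprob$ part of the monotonicity argument to be the main obstacle: one has to assemble the successor-wise domination witnesses into a single global selection $X\inpw\mathcal{X}$ — exactly where the nonemptiness invariant is needed — and to check carefully that $\conv(\cdot)$ and the probabilistic Minkowski sum genuinely preserve inclusion in each argument. The only other delicate point is the well-definedness of inclusion-minimal elements, which reduces to the (easy) finiteness invariant; everything else (downward-closedness and polyhedrality being preserved, and termination for a fixed step bound $k$) is routine and already implicit in the definitions of $\uviop$ and $\mu$.
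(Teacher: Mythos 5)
Your proposal is correct and takes essentially the same route as the paper: reduce to Lemma~\ref{lem:uvicorrect} and then argue that interleaving $\mu$ does not change the final intersection because $\mu$ only discards dominated elements and domination is preserved by $\uviop$. Your operator-level monotonicity induction (via the preorder $\preceq$ and the nonemptiness/finiteness invariant) is a clean formalization of exactly the observation the paper states in terms of \Spoiler's dominated $k$-step strategies, and the points you flag as delicate (assembling successor-wise witnesses into a global selection, and existence of inclusion-minimal elements) are indeed the only nontrivial steps.
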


\subparagraph{Experiments.}

To assess the complexity of Algorithm~\ref{alg:VI} in practice we have built a prototypical implementation\footnote{Available at \url{https://doi.org/10.5281/zenodo.5047440}} using the \textsc{Parma Polyhedra Library}~\cite{PPL}.
We have tested our implementation on a variant of the smart heating example from~\cite{BF16} that was itself inspired from the case study in \cite{larsen-smartheating}.
Further, we consider randomly generated 2-dimensional games with 10 states (see \iftoggle{arxiv}{Appendix~\ref{app:exp-details}}{\cite{arxiv}} for more details).
To determine the relative overhead of our algorithm compared to CQ-VI we consider the number $n^k_s$ of polyhedra maintained at state $s$ in iteration $k$.
For the floor heating example we found that $n^k_s = 1$ for all $k \geq 0$ and $s \in S$, which means that \Spoiler\ has a unique optimal strategy from each $s$ and step-bound $k$.
Moreover, the game is determined.
For the randomly generated games, we observed that approximately 90\% of them also had $n^k_s = 1$ for all $k \geq 0$ and $s \in S$.
We conjecture that this is indeed a typical situation (as in the floor heating example), however, 90\% might be a too high estimate due to trivial random games.
In the following, we only consider ``hard'' instances with $n^k_s > 1$ for at least one state $s$ and some $k\geq 0$.
In Table 4, we report the mean number of polyhedra $\overline{n}^k = \frac{1}{|S|}\sum_{s \in S} n^k_s$ of 100 ``hard'' games for various iteration counts $k$.
The empirical average of $\overline{n}^k$ over the 100 instances that were processed within the timeout is given in column $E[\overline{n}^k]$ and the number of timeouts (10 seconds) in column T/O.
We have also compared the algorithm with and without the $\mu$-operation in Line 4 of Algorithm~\ref{alg:VI} (columns $\mu \circ \uviop$ and $\uviop$, respectively).


In summary, our experiments show that in many cases the necessary number of polyhedra is low enough to be feasible, often even only 1. In ``hard'' cases, our results show that after dozens of iterations the number of polyhedra blows up dramatically, frequently resulting in a timeout (which is why the numbers for $\uviop$ decrease after 10 iterations, as only the instances with lower $\overline{n}^k$ finish). This highlights the difficulty of DQs compared to CQs.
Still, using our optimization $\mu$, in many ``hard'' cases the computation finishes and the number of polyhedra per state stays below 10, and thus is 2 orders of magnitude smaller than without~$\mu$.


\begin{figure}[t] 
        \begin{minipage}[t]{0.6\textwidth}
            \centering
            \captionof{algorithm}{Value Iteration for CQs in the asserted-exposure ($\forall \exists$) semantics.}
            \label{alg:VI}
            \rule{\linewidth}{0.8pt}
            \begin{algorithmic}[1]
                \Require Game $\game$, (mixed) CQ $\query$, horizon $k \geq 0$
                \Ensure The horizon-$k$ Pareto set $\upval(\game^{\leq k}, \query)$
                \State $\mathcal{X} \gets \{X^0(\query)\}$ \Comment{Initialization}
                \For {$i$ from $1$ to $k$}
                \State $\mathcal{X} \gets \uviop(\mathcal{X})$ \Comment{Apply $\uviop$ according to Figure~\ref{fig:vi_ops}}
                \State $\mathcal{X} \gets \mu(\mathcal{X})$ \Comment{Keep only $\subseteq$-minima of each $\mathcal{X}_s$}
                \EndFor
                \State \textbf{return} $\bigcap \mathcal{X}_\sinit$ \Comment{Intersection of curves at initial state}
            \end{algorithmic}
            \rule{\linewidth}{0.8pt}
        \end{minipage}%
        \hfill
        \begin{minipage}[t]{0.35\textwidth}
            \centering
            \captionof{table}{Experimental results for a fixed timeout of 10s.}
            \label{tab:exp}
            {\def\arraystretch{1.099}
            \begin{tabular}{r r r r r}\toprule[0.8pt]
                 & \multicolumn{2}{c}{$E[\overline{n}^k]$} & \multicolumn{2}{c}{T/O}\\ 
                $k$ & $\uviop$  & $\mu \circ \uviop$ &$\uviop$ & $\mu \circ \uviop$\\ \midrule
                1 & 1.2 & 1.1 & 0 & 0 \\
                5 & 16.1 & 1.8 & 0  & 0\\
                10 & 526.3 & 6.8 & 63 & 12\\
                20 & 202.4 & 5.4 & 80 & 30 \\
                100 & 78.9 &2.6 & 90 & 50\\ \bottomrule[0.8pt]
            \end{tabular}}
        \vspace{0.2cm}
        \end{minipage}
\end{figure}

\section{Conclusion and Future Work}

We have presented a detailed picture of computational and strategy complexity of SGs with DQ winning conditions.
The results were obtained in part by providing reductions from CQs to DQs and applying results from the literature.
Future work on the complexity side includes closing the gaps in Tables \ref{tab:scomp} and \ref{tab:ccomp}; however, we conjecture that this requires significant new insights.
For example, a major obstacle towards proving $\PSPACE$ membership of the quantitative general-strategies DQs problem is that one has to reason about \emph{exact} reachability probabilities in the exponentially large goal-unfolding.
It is not at all obvious that the number of bits needed for the rational representations of these quantities remains polynomially bounded.

We have also argued that DQs are equivalent to CQs in the optimistic ``asserted-exposure'' ($\forall\exists$) semantics obtained by changing the quantification order over strategies---unlike in simple SGs, this makes a difference since our games are not always determined.
Moreover, we have formulated the first VI-style algorithm for DQs in the standard and CQs in the $\forall\exists$-semantics.
It should be straightforward to extend our algorithm to expected rewards as well.
Another interesting application of the algorithm is to certify determinacy (for a finite step bound).
Regarding future work, it would be appealing to implement the algorithm in a tool such as PRISM-games and to experiment with more realistic case studies.
Yet another direction is to investigate (counter-)strategy synthesis for \Achiever\ in the $\forall\exists$-semantics, e.g., by constructing strategy templates where some choices depend on \Spoiler's observable strategy.


\bibliographystyle{eptcs}
\bibliography{ref}

\iftoggle{arxiv}{
    \appendix
    \input{appendix}
}{}

\end{document}